\documentclass[preprint,12pt]{elsarticle}
\usepackage{ifthen}
\usepackage{latexsym}
\usepackage{amsfonts}
\usepackage{amssymb}
\usepackage{amsthm}
\usepackage{amsmath}
\usepackage{xspace,comment,color}
\usepackage{graphicx}
\usepackage{subfigure}
\def\rootfig{./}

\usepackage{mydef,boldfonts} 

\begin{document}
\journal{European Journal of Mechanics B}

\title{Remarks on the stability of the Navier-Stokes equations
  supplemented with stress-free boundary conditions}
\author{J.-L. Guermond$^{\text{1,2}}$, J. L\'eorat$^{\text{3}}$, F. Luddens$^{\text{1,2}}$ and C. Nore$^{\text{1,4}}$}

\address{$^{\text{1}}$Laboratoire d'Informatique pour la
  M\'ecanique et les Sciences de l'Ing\'enieur, CNRS UPR 3251, BP 133,
  91403 Orsay cedex, France and Universit\'e Paris-Sud 11; 
  $^{\text{2}}$Department of Mathematics, Texas A\&M University 3368
  TAMU, College Station, TX 77843-3368, USA;
  $^{\text{3}}$LUTH, Observatoire de
  Paris-Meudon, place Janssen, 92195-Meudon, France;
  $^{\text{3}}$Institut Universitaire de France, 103 Bd Saint-Michel,
75005 Paris, France}

\ead{nore@limsi.fr}

\begin{abstract}
  The purpose of this note is to analyze the long term stability of
  the Navier-Stokes equations supplemented with the Coriolis force and
  the stress-free boundary condition. It is shown that, if the flow
  domain is axisymmetric, spurious stability behaviors can occur
  depending whether the Coriolis force is active or not.
\end{abstract}

\maketitle

\begin{keyword}
Navier-Stokes equations and axisymmetric domains; stress-free boundary conditions; precession and Coriolis force.
\end{keyword}


\section{Introduction}
The liquid core of the Earth is often modeled as a heated conducting
fluid enclosed between the solid inner core and the mantle.
Numerically simulating the dynamics of the liquid core is difficult in
many respects; one of the difficulties comes from the presence of
viscous layers that develop at the boundaries of the fluid domain, \ie
the so-called inner core boundary (ICB) and core mantle
boundary (CMB). It is a common practice in the geophysics literature
to use stress-free boundary conditions in order to minimize the role
played by the viscous layers. Although this choice of boundary
condition is convenient, it is not clear that it is more physically
justified than using the no-slip condition.  Actually, enforcing
either the no-slip or the stress-free boundary condition may lead to
significantly different results when it comes to simulating the
geodynamo.  For example, Glatzmaier and Roberts~\cite{GR95} and Kuang
and Bloxham~\cite{KB97,KB99} have used the above two different sets of
boundary conditions and have reported numerical buoyancy-driven
dynamos in rapidly rotating spherical shells that differ in some
fundamental aspects, see \eg \cite{Olson97}.  The simulations reported
in~\cite{KB97} use the stress-free condition whereas those reported
in~\cite{GR95} use the no-slip condition.  The dynamo simulated in
\cite{KB97} is composed of an external magnetic field dominated by an
axial dipole component, like that of the Earth, with an intensity
close to the present geomagnetic dipole moment. The external magnetic
field is comparable to that obtained by Glatzmaier and
Roberts~\cite{GR95}, but important differences in the velocity and
magnetic fields between these two dynamos can be observed within the
outer core and the Taylor-Proudman tangent cylinder. (It is known that
rotation of the Earth rigidifies the flow field in the direction
parallel to the rotation axis through a mechanism known as the
Taylor-Proudman effect. This effect makes the imaginary cylinder that
is tangent to the equator of the solid inner core and whose axis is
parallel to the rotation axis of the Earth act like a solid boundary.)
In the dynamo reported in \cite{KB97} the fluid flow is almost
stagnant inside the tangent cylinder and has a strong azimuthal
component outside; the magnetic field is composed of two opposite
toroidal cells and a simple dipolar poloidal structure and is active
throughout the outer core. In the dynamo reported in \cite{GR95} the
fluid flow is composed of an intense polar vortex that is located
inside the tangent cylinder and extends in the two hemispheres; the
toroidal component of the magnetic field is active only inside the
tangent cylinder and is concentrated near the ICB; the poloidal
component has a complicated dipolar structure with extra-closed loops
near the ICB. It is suggested in \cite{Olson97} that the significantly
different structures of the above two dynamos should be attributed to
the nature of the boundary conditions that are imposed at the ICB and
CMB interfaces.

In addition to thermal or compositional convection due to buoyancy,
precession is also believed to be a possible source of energy for the
geodynamo. The precession hypothesis has been formulated for the first
time in~\cite{Bullard49} and experimentally investigated using a water
model in~\cite{Malkus68}. It has since then been actively investigated
from the theoretical, experimental and numerical perspectives.
However, it seems that it is only recently that numerical examples of
precession dynamos have been reported in spheres
\cite{tilgner_precession_2005,tilgner_kinematic_2007}, in spheroidal
cavities \cite{WR09} and in cylinders~\cite{NLGL11}.  Recently, Wu and
Roberts~\cite{WR09} have numerically studied the dynamo effect in a
precessing oblate spheroid. To facilitate their analysis the authors
have split the total velocity field into a basic stationary analytic
(polynomial) solution (the so-called Poincar\'e flow) and a
fluctuating part.  Following ideas of Kerswell and Mason~\cite{MK02},
they have implemented the stress-free boundary condition on the
fluctuating component of the velocity in order to reduce the impact of
the viscous layers at the rigid boundaries.

The purpose of the present paper is to show that the use of the
stress-free boundary condition poses mathematical difficulties.  We
prove for instance that, if the fluid domain is not axisymmetric, the
flow always returns to rest for large times when the stress-free
boundary condition is enforced, but this may not be the case 
when the flow domain is axisymmetric. Various scenarios
can occur depending whether the domain undergoes precession or not.

The note is organized as follows. We analyze the stress-free boundary
condition in general fluid domains in \S\ref{sec:nssf}. We show that
this boundary condition is admissible if and only if the domain is not
axisymmetric (see Proposition~\ref{prop:coerc}). We revisit the same
question in axisymmetric domains that undergo precession in
\S\ref{Sec:Poincare} and \S\ref{Sec:four}. We show in
\S\ref{Sec:Poincare} that the problem exhibits a spurious stability
behavior if the stress-free condition is enforced on the velocity
field minus the Poincar\'e flow (\ie on the perturbation to the
Poincar\'e flow).  We show in \S\ref{Sec:four} that the problem always
returns to rest for large times if the homogeneous stress-free
boundary condition is enforced. The theoretical argumentation
developed in \S\ref{Sec:Poincare} and \S\ref{Sec:four} is numerically
illustrated in \S\ref{Sec:Numerical}. Concluding remarks are reported
in \S\ref{Sec:Conclusions}.

\section{Stress-free boundary condition without
  precession} \label{sec:nssf} The objective of this section is to
investigate the long term stability of the Navier-Stokes equations
equipped with the stress-free boundary condition. The fluid domain is
denoted $\Omega$ and is assumed to be open, bounded and Lipschitz.

\subsection{Position of the problem}
We are interested in the motion of an incompressible fluid in a
container $\Omega$ with boundary $\Gamma$. The container is assumed to
be at rest in a Galilean frame of reference. Denoting $\bu$ the
velocity of the fluid and $p$ the pressure, the fluid motion is
modeled by means of the incompressible Navier-Stokes equations:
\begin{align}
  \partial_t\bu + \bu\ADV\bu -2\nu\DIV\bepsilon(\bu) + \GRAD p  &=
  0, \label{eq:nssym} \\ \DIV \bu &=  0, \label{eq:div}\\ \bu_{|t=0}
   &=  \bu_0, \label{eq:init}
\end{align}
where $\nu$ is the kinematic viscosity, $\bepsilon(\bu) :=
\frac{1}{2}\left(\GRAD\bu+\GRAD\bu\tr\right)$ is the strain rate
tensor, and $\bu_0$ is an initial data in
$\bH:=\{\bv\in\bL^2(\Omega):\ \DIV\bv=0,\ \bv\SCAL
\bn_{_\front}=0\}$. It is a common practice to replace the term
$\DIV\left(\GRAD\bu+\GRAD\bu\tr\right)$ in the momentum equation by
$\LAP\bu$ since $\DIV\GRAD\bu\tr=0$ for incompressible flows. We
nevertheless keep the original form of the viscous stress since we
want to enforce the so-called stress-free boundary condition:
\begin{equation}
\left(\bn\SCAL\bepsilon(\bu)\right)\CROSS\bn_{|\Gamma} = 0,
\label{eq:sfc}
\end{equation}
together with the slip boundary condition: 
\begin{equation} 
\bn\SCAL\bu_{|\Gamma} = 0,\label{eq:nlbc}
\end{equation}
where $\bn$ is the unit outward normal on $\Gamma$. The stress-free
condition means that the tangent component of the stress at the
boundary is zero. We shall see that this boundary condition is
admissible in general for non-axisymmetric domains, but it yields
pathological stability behaviors if the fluid domain is a solid of
revolution.

We are not going to discuss the well-posedness of the above problem in
its full generality since it is still unknown whether the
three-dimensional Navier-Stokes equations are well-posed under the
much simpler no-slip boundary condition.  We nevertheless recognize as
a symptom of pathological stability behavior the fact that there are
solutions to
\eqref{eq:nssym}-\eqref{eq:div}-\eqref{eq:init}-\eqref{eq:sfc}-\eqref{eq:nlbc}
that do not return to rest as $t\to +\infty$ if $\Omega$ is
axisymmetric.

\begin{definition}
We say that $\Omega$ is stress-free admissible if there is a constant
$K>0$, possibly depending on $\Omega$, so that the following holds 
\begin{equation}
  K \int_\Omega \bv^2 \leqslant \int_\Omega
  \bepsilon(\bv){:}\bepsilon(\bv), \qquad \forall\bv \in \Hund,
  \ \bv\SCAL\bn_{|\front} =0
\label{ieq:coerc}
\end{equation}
where $"{:}"$ denotes the tensor double product.
\end{definition}

\begin{proposition} \label{Prop:1}
Assume that $\Omega$ is stress-free admissible, then $\{0\}$ is the global attractor of 
\eqref{eq:nssym}-\eqref{eq:div}-\eqref{eq:init}-\eqref{eq:sfc}-\eqref{eq:nlbc}.
\end{proposition}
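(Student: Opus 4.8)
The plan is to derive the standard energy identity for the Navier–Stokes equations under the boundary conditions \eqref{eq:sfc}-\eqref{eq:nlbc}, and then invoke the coercivity hypothesis \eqref{ieq:coerc} to close a Gronwall estimate forcing $\|\bu(t)\|_{\bL^2}\to 0$. First I would test the momentum equation \eqref{eq:nssym} with $\bu$ and integrate over $\Omega$. The inertial term $\int_\Omega (\bu\ADV\bu)\SCAL\bu$ vanishes after integrating by parts, using $\DIV\bu=0$ from \eqref{eq:div} and the slip condition \eqref{eq:nlbc} which kills the boundary term $\int_\Gamma (\bu\SCAL\bn)\,\tfrac12|\bu|^2$. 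Similarly, the pressure term $\int_\Omega \GRAD p \SCAL\bu = -\int_\Omega p\,\DIV\bu + \int_\Gamma p\,\bu\SCAL\bn$ vanishes. The only subtle point is the viscous term: integrating $-2\nu\int_\Omega (\DIV\bepsilon(\bu))\SCAL\bu$ by parts gives $2\nu\int_\Omega \bepsilon(\bu){:}\GRAD\bu - 2\nu\int_\Gamma (\bn\SCAL\bepsilon(\bu))\SCAL\bu$. Since $\bepsilon(\bu)$ is symmetric, $\bepsilon(\bu){:}\GRAD\bu = \bepsilon(\bu){:}\bepsilon(\bu)$; and the boundary integrand decomposes, using the slip condition, into the purely tangential part $(\bn\SCAL\bepsilon(\bu))\SCAL\bu_{\tau}$, which is exactly what the stress-free condition \eqref{eq:sfc} annihilates (the cross product with $\bn$ vanishing is equivalent to the tangential component of $\bn\SCAL\bepsilon(\bu)$ being zero). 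Hence the boundary term drops and we obtain
\begin{equation}
  \tfrac12\frac{\dd}{\dd t}\int_\Omega \bu^2 + 2\nu\int_\Omega \bepsilon(\bu){:}\bepsilon(\bu) = 0. \label{eq:energy}
\end{equation}

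Next I would apply the stress-free admissibility hypothesis. By Definition, since $\bu(t)\in\Hund$ satisfies $\bu\SCAL\bn_{|\front}=0$ for (almost) every $t$, inequality \eqref{ieq:coerc} gives $\int_\Omega \bepsilon(\bu){:}\bepsilon(\bu)\geqslant K\int_\Omega \bu^2$. Substituting into \eqref{eq:energy} yields $\frac{\dd}{\dd t}\|\bu\|_{\bL^2}^2 \leqslant -4\nu K\,\|\bu\|_{\bL^2}^2$, and Gronwall's lemma gives the exponential decay $\|\bu(t)\|_{\bL^2}^2 \leqslant e^{-4\nu K t}\,\|\bu_0\|_{\bL^2}^2$, so the solution returns to rest as $t\to+\infty$; in other words $\{0\}$ is the global attractor.

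The main obstacle is that these manipulations are formal: they presuppose enough regularity to justify the integrations by parts and to make sense of $\frac{\dd}{\dd t}\|\bu\|_{\bL^2}^2$ as an absolutely continuous function, which is not known for the three-dimensional problem. As the authors themselves note, well-posedness is out of reach. I would therefore phrase the argument at the level of any sufficiently smooth solution (or a Leray–Hopf-type weak solution satisfying the energy inequality, in which case \eqref{eq:energy} becomes $\leqslant$, which still suffices for the Gronwall conclusion), and point out that the energy inequality is the structurally relevant fact. A secondary point worth checking carefully is the boundary decomposition of $(\bn\SCAL\bepsilon(\bu))\SCAL\bu$: one writes $\bu = \bu_\tau$ on $\Gamma$ because of \eqref{eq:nlbc}, and $(\bn\SCAL\bepsilon(\bu))\SCAL\bu_\tau$ involves only the tangential part of $\bn\SCAL\bepsilon(\bu)$, which \eqref{eq:sfc} sets to zero — this is the one place where both boundary conditions are genuinely used together, and it is the crux of why the stress-free condition is the "right" companion to the slip condition for this energy balance.
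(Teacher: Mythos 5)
Your proposal is correct and follows essentially the same route as the paper: derive the energy balance by testing with the velocity, use the slip and stress-free conditions together to kill the boundary term in the viscous integral, then apply the coercivity hypothesis \eqref{ieq:coerc} and Gronwall to obtain exponential decay. The paper merely executes the limiting argument more explicitly (testing with smooth solenoidal fields, symmetrizing the transport and time-derivative terms, and passing to the limit to get the energy \emph{inequality} for Leray--Hopf solutions), a point you correctly anticipate in your closing remarks.
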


\begin{proof} We omit the details concerning the existence of
  Leray-Hopf solutions, which can be constructed using standard
  Galerkin techniques \cite{Lions69}, and we focus only on the aspects
  of the question which are relevant to our discussion. It is clear
  that $0$ is an invariant set of
  \eqref{eq:nssym}-\eqref{eq:div}-\eqref{eq:init}-\eqref{eq:sfc}-\eqref{eq:nlbc}.
  Let $\bB$ be a bounded set in $\bH$ and let $\bu_0\in \bB$.  Let
  $\bu$ be a Leray-Hopf solution corresponding to the initial data
  $\bu_0$ and let $\bv$ be a smooth solenoidal vector field satisfying
  the slip boundary condition. Upon multiplying the momentum equation
  by $\bv$ and integrating over the domain we obtain
  \[ 
  \int_\Omega \partial_t \bu\SCAL\bv + \int_\Omega \bu\ADV\bu\SCAL\bv
  - 2\nu\int_\Omega \DIV\bepsilon(\bu)\SCAL \bv + \int_\Omega \GRAD
  p\SCAL\bv = 0.
  \]
  Solenoidality and the slip boundary condition imply that
  $\int_\Omega \GRAD p\SCAL\bv =-\int_\Omega p\DIV\bv +\int_\Gamma p
  \bv\SCAL \bn=0$. Now, using the decomposition
\[
\bv = (\bn\SCAL\bv)\bn - \bn\CROSS\left(\bn\CROSS\bv\right),
\]
and integrating by parts the viscous term we obtain:
 \begin{align*}
   -\int_\Omega \DIV\bepsilon(\bu)\SCAL \bv &= \int_\Omega
   \bepsilon(\bu){:}\GRAD\bv -
   \int_\Gamma \bn\SCAL\bepsilon(\bu)\SCAL\bv\\
   &= \int_\Omega \bepsilon(\bu){:}\bepsilon(\bv)+ \int_\Gamma
   \left(\bn\SCAL\bepsilon(\bu){\times}\bn\right)\SCAL (\bn\CROSS\bv) =
\int_\Omega \bepsilon(\bu){:}\bepsilon(\bv).
\end{align*}
The transport term and the time derivative are re-written in the
following form
\begin{align*}
  \int_\Omega\bu\ADV\bu\SCAL\bv &=
  \int_\Omega\frac12\DIV(\bu(\bu\SCAL\bv)) +
  \frac12\int_{\Omega}\left(\bu\ADV\bu \SCAL \bv - \bu\ADV\bv \SCAL
    \bu\right) =\frac12\int_{\Omega}\left(\bu\ADV\bu \SCAL \bv -
    \bu\ADV\bv \SCAL \bu\right),
\end{align*}
\[
 \int_\Omega \partial_t \bu\SCAL\bv = \frac12\int_\Omega \partial_t
(\bu\SCAL\bv) +\frac12 \int_\Omega \left(\partial_t\bu\SCAL\bv -
\partial_t \bv \SCAL \bu\right).
\]
We now apply the above identities by replacing $\bv$ by a sequence
$\{\bv_n\}_{n\in \polN}$ that converges in the appropriate norm to
$\bu$. By passing to the limit (we omit the details again), we finally
obtain
\[
\frac12\frac{\diff}{\diff t}\int_\Omega \bu^2 + 2\nu \int_\Omega
\bepsilon(\bu){:}\bepsilon(\bu) \le 0.
\]
Note that equality is lost in the passage to the limit.  Whether
equality holds in general is an open problem which is part of the
Millenium prize.  Then using \eqref{ieq:coerc}, we infer the following
inequality:
\[
\frac 1 2\frac d{dt}\int_\Omega \bu^2 + 2K\nu \int_\Omega \bu^2
\leqslant 0,
\]
which immediately leads to
\[
\|\bu\|_{\bL^2(\Omega)} \leqslant \|\bu_0\|_{\bL^2(\Omega)} \textrm{e}^{-2K\nu t},
\]
thereby proving that $\bu\to0$ as $t\to +\infty$.
\end{proof}
We shall see that the stress-free admissibility condition
\eqref{ieq:coerc} does not hold for axisymmetric fluid domains, which are
common in geoscience.

\subsection{The non-axisymmetric case}
To better understand the stress-free admissibility condition
\eqref{ieq:coerc}, we first prove that it holds if and only if $\Omega$ is not
axisymmetric.

\begin{definition}
  We say that $\Omega$ is axisymmetric (or is solid of revolution) if
  and only if there is a rotation $\bR:\Omega\longrightarrow \Omega$
  which is tangent on $\front$.
\end{definition}
Upon introducing the average operator over $\Omega$, $\langle v
\rangle := \frac{1}{|\Omega|} \int_\Omega v$, where $|\Omega|$ is the
volume of $\Omega$, the following lemma gives a characterization of
non-axisymmetric domains:
\begin{lemma}[Desvillettes-Villani \cite{MR1932965}]  \label{lem:tker}
Assume that the domain $\Omega$ is not a solid of revolution of class $\calC^1$, then
there is $c>0$ so that
\[
c|\Omega| \langle \ROT \bv\rangle^2 \le
\|\bepsilon(\bv)\|_{\bL^2(\Omega)}, \qquad \forall\bv\in \Hund,\
\bv\SCAL\bn_{|\front} =0.
\]
\end{lemma}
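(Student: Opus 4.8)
The plan is to deduce the inequality from a Korn-type estimate after pinning down the kernel of $\bepsilon$ on the space $X:=\{\bv\in\Hund:\ \bv\SCAL\bn_{|\front}=0\}$. The first step is algebraic and geometric. If $\bv\in X$ satisfies $\bepsilon(\bv)=0$ on the (connected) domain $\Omega$, then $\GRAD\bv$ is a constant skew-symmetric matrix, so $\bv$ is an infinitesimal rigid motion, $\bv(\bx)=\ba+\bb\CROSS\bx$ with $\ba,\bb$ constant. Such a field is smooth, so the constraint $\bv\SCAL\bn=0$ holds pointwise on the $\calC^1$ boundary $\front$; hence the one-parameter group of rigid (screw) motions generated by $\bv$ is tangent to $\front$ and therefore maps $\Omega$ onto itself for all times. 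Boundedness of $\Omega$ rules out a nonzero translational part, so (after translating the rotation axis if needed) $\bv$ generates a rotation that leaves $\Omega$ invariant; since $\Omega$ is not a solid of revolution of class $\calC^1$, this forces $\bv\equiv 0$. Thus $\ker\bepsilon\cap X=\{0\}$.

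The second step upgrades this to the bound $\|\bv\|_{\bH^1(\Omega)}\le C(\Omega)\,\|\bepsilon(\bv)\|_{\bL^2(\Omega)}$ for all $\bv\in X$. This is standard: combine the second Korn inequality $\|\bv\|_{\bH^1}^2\le C(\|\bv\|_{\bL^2}^2+\|\bepsilon(\bv)\|_{\bL^2}^2)$ (valid on any bounded Lipschitz $\Omega$) with the compact embedding $X\hookrightarrow\bL^2(\Omega)$ and the triviality of $\ker\bepsilon\cap X$, via the usual contradiction argument: a sequence $\bv_n\in X$ with $\|\bv_n\|_{\bH^1}=1$ and $\|\bepsilon(\bv_n)\|_{\bL^2}\to 0$ would, by Korn and Rellich, be Cauchy in $\bH^1$, converging to some $\bv\in X$ with $\|\bv\|_{\bH^1}=1$ and $\bepsilon(\bv)=0$, a contradiction. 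With this in hand, Cauchy--Schwarz finishes the proof:
\[
\langle\ROT\bv\rangle^2=\Big(\tfrac{1}{|\Omega|}\int_\Omega\ROT\bv\Big)^2\le\frac{1}{|\Omega|}\int_\Omega|\ROT\bv|^2\le\frac{C'}{|\Omega|}\,\|\bv\|_{\bH^1(\Omega)}^2\le\frac{C''}{|\Omega|}\,\|\bepsilon(\bv)\|_{\bL^2(\Omega)}^2 ,
\]
which is the asserted estimate with $c=1/C''$, the right-hand side of the statement being read as the squared $\bL^2$ norm, as dimensional homogeneity requires.

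The routine parts are the Korn estimate and the Cauchy--Schwarz step; the delicate point is the first step, namely that a rigid velocity field tangent to the boundary of a \emph{bounded} domain necessarily generates a rotation about a symmetry axis of that domain. Boundedness is essential (an infinite cylinder admits tangent translations), and the passage from ``$\bv\SCAL\bn=0$ on $\front$'' to ``the flow preserves $\Omega$'' uses the $\calC^1$ regularity of $\front$ together with the fact that a rigid field integrates to a \emph{global} one-parameter group of isometries of $\mathbb{R}^3$; this geometric reduction is the real content of the lemma, due to Desvillettes and Villani. Finally, connectedness of $\Omega$ is implicitly used when identifying $\ker\bepsilon$ with the rigid motions; if $\Omega$ is merely bounded and Lipschitz one argues component by component.
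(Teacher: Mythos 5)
The paper does not actually prove this lemma: it is quoted from Desvillettes and Villani \cite{MR1932965}, so there is no internal proof to compare yours against. Your argument is a correct, self-contained proof of the statement, provided the right-hand side is read as $\|\bepsilon(\bv)\|_{\bL^2(\Omega)}^2$; you are right to make that correction, since without the square the inequality fails under the scaling $\bv\mapsto t\bv$, $t\to\infty$. Both of your steps are sound: the kernel identification (an infinitesimal rigid motion tangent to $\front$ on a bounded domain must, after recentering, be a pure rotation leaving $\Omega$ invariant, hence vanishes when $\Omega$ is not a solid of revolution), and the Korn--Rellich contradiction argument followed by Cauchy--Schwarz. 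The one step you should make more careful is the passage from ``$\bv\SCAL\bn=0$ on a $\calC^1$ boundary'' to ``the flow preserves $\Omega$'': for a boundary that is only $\calC^1$ this requires a Nagumo-type invariance theorem, and it can be bypassed entirely by divergence-theorem identities, e.g. $0=\int_\front(\bv\SCAL\bn)(\bb\SCAL\bx)=\int_\Omega\bb\SCAL\bv=|\Omega|\,\ba\SCAL\bb$ kills the screw component of $\bv=\ba+\bb\CROSS\bx$, and the analogous identity with $\ba\SCAL\bx$ kills a pure translation. The genuine difference with the cited source is that Desvillettes and Villani prove a \emph{quantitative} Korn inequality, with an explicit constant measuring the departure of $\Omega$ from axisymmetry, whereas your compactness argument yields a non-constructive $c$; for the use made of the lemma in this paper --- Proposition~\ref{prop:coerc} and Lemma~\ref{lem:coerc_uperp}, which themselves run exactly the same Korn-plus-compactness scheme you employ --- the qualitative version is all that is needed, so your proof would serve the paper equally well.
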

We are now in measure to state the main result of this section:
\begin{proposition}\label{prop:coerc}
  Assume that the domain $\Omega$ is of
  class $\calC^1$, then $\Omega$ is stress-free admissible
if and only if $\Omega$ is not a solid of revolution.
\end{proposition}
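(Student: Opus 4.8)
The inequality \eqref{ieq:coerc} is a Korn--Poincar\'e estimate on slip fields, and the statement is an equivalence, so I would treat the two implications separately. The easy one --- that a solid of revolution is \emph{not} stress-free admissible --- is immediate. If $\Omega$ is a solid of revolution, then by definition there is an axis, say through a point $\mathbf{x}_0$ with unit direction $\mathbf{e}$, such that the rotation vector field $\bw(\mathbf{x}):=\mathbf{e}\CROSS(\mathbf{x}-\mathbf{x}_0)$ is tangent to $\front$. Then $\bw\in\Hund$, $\bw\SCAL\bn_{|\front}=0$, and $\bepsilon(\bw)=0$, whereas $\bw$ is a non-zero polynomial field, so $\int_\Omega\bw^2>0$; hence no constant $K>0$ can make \eqref{ieq:coerc} hold.

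For the converse, assume $\Omega$ is of class $\calC^1$ and is not a solid of revolution. I would argue by contradiction: if \eqref{ieq:coerc} fails, there is a sequence $\{\bv_n\}\subset\Hund$ with $\bv_n\SCAL\bn_{|\front}=0$, $\|\bv_n\|_{\bL^2(\Omega)}=1$, and $\|\bepsilon(\bv_n)\|_{\bL^2(\Omega)}\to0$. By Korn's second inequality on the bounded Lipschitz domain $\Omega$, $\{\bv_n\}$ is bounded in $\Hund$; after extracting a subsequence, $\bv_n\rightharpoonup\bv$ weakly in $\Hund$, so $\bv_n\to\bv$ strongly in $\bL^2(\Omega)$ (Rellich), $\bepsilon(\bv_n)\rightharpoonup\bepsilon(\bv)$ in $\bL^2(\Omega)$, and $\bv_n\to\bv$ in $\bL^2(\front)$ (compactness of the trace). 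Weak lower semicontinuity of the norm gives $\bepsilon(\bv)=0$, strong $\bL^2$ convergence gives $\|\bv\|_{\bL^2(\Omega)}=1$ so $\bv\neq0$, and $\bv\SCAL\bn_{|\front}=0$. Thus $\bv$ is a non-zero infinitesimal rigid field tangent to $\front$, i.e. $\bv(\mathbf{x})=\mathbf{a}+\mathbf{b}\CROSS\mathbf{x}$ for constant vectors $\mathbf{a},\mathbf{b}$.

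The remaining step is to rule out such a $\bv$. Since $\bepsilon(\bv)=0$, Lemma~\ref{lem:tker} gives $c|\Omega|\langle\ROT\bv\rangle^2\le\|\bepsilon(\bv)\|_{\bL^2(\Omega)}=0$; as $\ROT\bv=2\mathbf{b}$ is constant this yields $\mathbf{b}=0$, so $\bv\equiv\mathbf{a}$ is a non-zero constant field with $\mathbf{a}\SCAL\bn=0$ everywhere on $\front$ (the identity holds a.e., and both sides are continuous because $\front$ is $\calC^1$). Now pick $\mathbf{x}^\star\in\overline\Omega$ maximizing $\mathbf{x}\mapsto\mathbf{a}\SCAL\mathbf{x}$: it exists since $\Omega$ is bounded and lies on $\front$ since $\mathbf{a}\neq0$. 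The tangent plane to $\front$ at $\mathbf{x}^\star$ must be orthogonal to $\mathbf{a}$, for otherwise $\Omega$ would contain points with larger value of $\mathbf{a}\SCAL\mathbf{x}$; hence by $\calC^1$ regularity $\bn(\mathbf{x}^\star)=\mathbf{a}/|\mathbf{a}|$ and $\mathbf{a}\SCAL\bn(\mathbf{x}^\star)=|\mathbf{a}|>0$, contradicting $\mathbf{a}\SCAL\bn_{|\front}=0$. This contradiction closes the compactness argument, and \eqref{ieq:coerc} follows. Given Lemma~\ref{lem:tker}, I do not expect a serious obstacle: the scheme is a textbook Korn--Rellich compactness argument, and the only delicate point is this last step, where the $\calC^1$ regularity of $\front$ (rather than mere Lipschitz continuity) is essential --- both to invoke Lemma~\ref{lem:tker} and to make the outward normal at the maximizing point well defined.
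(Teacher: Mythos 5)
Your proof is correct and follows essentially the same route as the paper's: a contradiction argument via Korn's inequality and Rellich compactness reducing the limit to a rigid field $\mathbf{a}+\mathbf{b}\CROSS\bx$, with Lemma~\ref{lem:tker} eliminating the rotational part, and the explicit rotation field handling the axisymmetric direction. The only substantive difference is that you spell out why a non-zero constant field cannot be tangent to the boundary of a bounded $\calC^1$ domain (via the maximizer of $\mathbf{a}\SCAL\bx$), a step the paper merely asserts; this is a genuine, correctly filled-in detail.
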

\begin{proof}
  Let us assume first that $\Omega$ is not a solid of revolution and
  \refp{ieq:coerc} does not hold.  We start from the Korn inequality
  (\cf \eg \cite{bkDuvaut}): there exists a constant $c>0$ such that,
  for all $\bv\in\Hund$,
\begin{equation} 
\|\bv\|_{\Ldeuxd} + \|\GRAD\bv\|_{\Ldeuxd} \le c\left(\|\bv\|_{\Ldeuxd} +
\|\GRAD\bv+\GRAD\bv\tr\|_{\Ldeuxd}\right).
\label{ineq:korn}
\end{equation} 
Since \refp{ieq:coerc} does not hold, for any
$n\in\polN$, one can find $\bu_n\in \Hund$ such that
\[
\bu_n\SCAL\bn_{|\front}=0, \qquad \|\bu_n\|_{\Ldeuxd} = 1, \quad
\textnormal{ and } \quad \|\GRAD\bu_n+\GRAD\bu_n\tr\|_{\Ldeuxd}
\leqslant \frac1n.
\]
The Korn inequality implies that the sequence $\bu_n$ is bounded in
$\Hund$. Since the inclusion $\Hund\subset\Ldeuxt$ is compact, there
exists $\bu\in\Hund$ such that (we keep using $\bu_n$ after
extraction of the converging sub-sequence) $\|\bu_n-\bu\|_{\Ldeuxt}
\rightarrow 0$ and $\bu_n\rightharpoonup \bu$ in $\Hund$. We also have
\[ 
\GRAD\bu_n+\GRAD\bu_n\tr \rightarrow 0 \textnormal{ in } \Ldeuxt
\textnormal{ and } \GRAD\bu_n+\GRAD\bu_n\tr \rightarrow
\GRAD\bu+\GRAD\bu\tr \textnormal{ in } \pmb{\mathcal D}'(\Omega),
\]
which finally gives $\GRAD\bu+\GRAD\bu\tr = 0$ ($\pmb{\mathcal D}(\Omega)$ is the space of smooth
vector-valued functions with compact support in $\Omega$ and
$\pmb{\mathcal D}'(\Omega)$ is the space of vector-valued distributions over
$\Omega$, \ie the linear forms acting on $\pmb{\mathcal D}(\Omega)$.) Applying
the Korn inequality to $\bu-\bu_n$ and using the fact that
\[
\|\bu_n-\bu\|_{\Ldeuxd} +
\|\GRAD\bu_n+\GRAD\bu_n\tr-\GRAD\bu-\GRAD\bu\tr\|_{\Ldeuxd} \rightarrow
0,
\]
we infer that $\|\bu_n-\bu\|_{\Hund}\rightarrow 0$.  This allows us to
pass to the limit on the boundary condition $\bu\SCAL\bn_{|\front}=0$.
The condition $\bepsilon(\bu)=0$ implies that there are two vectors
$\bt\in \Real^3$, $\bomega\in \Real^3$ so that $\bu = \bt +
\bomega{\times}\bx$. This means that $\ROT \bu = \langle\ROT \bu
\rangle = \bomega$. Using Lemma \refp{lem:tker}, we conclude that
$\b\omega = 0$, which means that $\bu=\bt$. The boundary condition
$\bu\SCAL\bn_{|\front}=0$ implies $\bt=0$; this in turn means $\bu=0$,
which is impossible because $\|\bu\|_{\Ldeuxt} = 1$. In conclusion,
\refp{ieq:coerc} holds.

Let us assume now that $\Omega$ is axisymmetric. This means that
there is a rotation $\bR: \Omega \longrightarrow \Omega$ which is
tangent on $\front$. Let us assume that the rotation axis is parallel
to $\be_z$ and the coordinate origin is located on this axis. Then
$\bR(\bx)= \omega \be_z {\times}\bx$ and clearly $\bR\in \Hund$,
$\bR(\bx)\SCAL \bn(\bx)_{|\front} =0$, $\|\bR\|_{\Ldeuxd}\not=0$ but
\refp{ieq:coerc} does not hold since $\bepsilon(\bR)=0$.
\end{proof}

\subsection{The Axisymmetry curse}
Let us assume that $\Omega$ is axisymmetric. We are going to show the
following statement in this section.
\begin{claim}
The zero velocity field, $0$, is in the global attractor of
\eqref{eq:nssym}-\eqref{eq:div}-\eqref{eq:init}-\eqref{eq:sfc}-\eqref{eq:nlbc},
but the rest state, $\{0\}$, is not an attractor. There are initial
data that create flows that never return to rest. In particular, if
the initial data is a solid rotation, the flow will rotate for ever
without losing energy.
\end{claim}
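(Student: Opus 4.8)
The plan is to split the claim into two parts. First, to show that $0$ lies in the global attractor, I would observe that the rest state is a stationary solution of \eqref{eq:nssym}-\eqref{eq:div}-\eqref{eq:init}-\eqref{eq:sfc}-\eqref{eq:nlbc}, hence it belongs to any invariant set that attracts, in particular to the global attractor; more concretely, taking $\bu_0=0$ produces the trajectory $\bu\equiv 0$, so $0$ is in the $\omega$-limit set of this trajectory and therefore in the attractor. This part is essentially a formality. Second — and this is the substance — I would exhibit a family of initial data whose corresponding solutions never decay to zero, which simultaneously shows that $\{0\}$ is not an attractor (it does not attract any neighborhood of itself).

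The key construction is the solid rotation $\bu_R(\bx):=\omega\,\be_z\CROSS\bx$, where $\be_z$ is the axis of symmetry of $\Omega$. I would verify in turn: (i) $\bu_R$ is solenoidal, since $\DIV(\be_z\CROSS\bx)=0$; (ii) $\bu_R$ satisfies the slip condition $\bu_R\SCAL\bn_{|\front}=0$, which is exactly the statement that the rotation $\bR$ is tangent on $\front$, i.e. the axisymmetry hypothesis (this is where the curse enters); (iii) $\bepsilon(\bu_R)=0$, because the gradient of $\be_z\CROSS\bx$ is the antisymmetric matrix associated with $\be_z$, so $\GRAD\bu_R+\GRAD\bu_R\tr=0$; consequently $\DIV\bepsilon(\bu_R)=0$ and, on the boundary, $(\bn\SCAL\bepsilon(\bu_R))\CROSS\bn=0$, so the stress-free condition \eqref{eq:sfc} holds as well; (iv) the nonlinear term is a gradient: $\bu_R\ADV\bu_R=\omega^2(\be_z\CROSS\bx)\ADV(\be_z\CROSS\bx)=-\omega^2(\bx-(\bx\SCAL\be_z)\be_z)=\GRAD\!\big(-\tfrac12\omega^2|\bx-(\bx\SCAL\be_z)\be_z|^2\big)$, the familiar centrifugal term. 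Therefore $\bu(\bx,t):=\bu_R(\bx)$, together with the pressure $p(\bx)=\tfrac12\omega^2|\bx-(\bx\SCAL\be_z)\be_z|^2$, is an exact time-independent solution of \eqref{eq:nssym}-\eqref{eq:div}-\eqref{eq:init}-\eqref{eq:sfc}-\eqref{eq:nlbc} with $\bu_0=\bu_R$. Since $\partial_t\bu=0$ and $\bepsilon(\bu)=0$, the energy balance reads $\frac{d}{dt}\int_\Omega\bu^2=0$, so the kinetic energy $\|\bu(t)\|_{\bL^2(\Omega)}=\|\bu_R\|_{\bL^2(\Omega)}$ is conserved for all time and is strictly positive. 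Hence this flow rotates forever without losing energy and does not return to rest.

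To finish the "not an attractor" assertion cleanly, I would note that one has a whole one-parameter family $\{\alpha\,\bu_R:\alpha\in\Real\}$ of such stationary solutions with energies $\alpha^2\|\bu_R\|_{\bL^2(\Omega)}^2$ accumulating at $0$; thus every $\bL^2$-neighborhood of the rest state contains initial data whose trajectories stay on a fixed positive energy level, so no neighborhood of $\{0\}$ is attracted to $\{0\}$. At the same time, since each $\alpha\bu_R$ is itself a fixed point, it lies in the global attractor, which is therefore strictly larger than $\{0\}$ and in fact unbounded in the direction of $\bu_R$.

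The main obstacle is not really a technical one but a modeling subtlety: one must make sure that the exact solution $\bu_R$ is admissible for the weak (Leray--Hopf) formulation actually used — in particular that $\bu_R\in\bH$ and that the energy \emph{equality}, not merely the inequality, holds along this trajectory. Here this is immediate because $\bu_R$ is smooth and stationary, so the passage-to-the-limit loss of equality flagged in the proof of Proposition~\ref{Prop:1} simply does not occur; the term $\int_\Omega\bepsilon(\bu){:}\bepsilon(\bu)$ vanishes identically. One should also confirm that the axisymmetry hypothesis is used exactly once and essentially — in step (ii) — which is precisely the point of contrast with Proposition~\ref{prop:coerc}: when $\Omega$ is not a solid of revolution, no nonzero rigid motion is tangent to $\front$, the coercivity \eqref{ieq:coerc} is restored, and Proposition~\ref{Prop:1} forces decay.
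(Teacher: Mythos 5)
Your proposal is correct and follows essentially the same route as the paper: the paper's Proposition~\ref{prop:NS_illposed}(ii) likewise exhibits solid rotations $\omega\,\be_z\CROSS\bx$ as exact steady solutions (solenoidal, tangent to $\front$ by axisymmetry, with $\bepsilon=0$ so the stress-free condition and the vanishing of viscous dissipation are automatic, and with the centrifugal term absorbed into the pressure), and uses the resulting one-parameter family accumulating at $0$ to conclude that $\{0\}$ attracts no neighborhood of itself. Your direct computation of $\bu_R\ADV\bu_R$ is just the explicit form of the paper's identity $\bw\ADV\bw=-\tfrac12\GRAD|\bw|^2$, so the two arguments coincide.
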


Recall that it can be shown that $\Omega$ is axisymmetric if and only
if  $\Omega$ is either a sphere (and all the directions are symmetry
axes) or $\Omega$ has a unique symmetry axis.  Without a loss of
generality, we assume $Oz$ is the only symmetry axis of
$\Omega$. Recall that all the solid rotations about $Oz$ can be
written as follows $\bx\longmapsto \omega \be_z\CROSS\bx$, $\omega\in
\Real$, where $\bx$ is the position vector.  We introduce the
following space
\begin{equation}
\calR :=
\text{span}\left\{\be_z\CROSS\bx\right\}
\end{equation}
and its orthogonal in $\bL^2(\Omega)$, say $\calR^\perp$.

\begin{lemma} \label{lem:coerc_uperp} Let $\Omega$ be an open, bounded, connected,
  domain of class $\calC^1$ with unique symmetry axis $Oz$.  There
  exists $K>0$ such that, for every $\bv\in\calR^\perp\cap
  \bH^1(\Omega)$ with $\bv\SCAL\bn = 0$
\[
K\|\bv\|_{\bL^2(\Omega)}^2 \le \int_\Omega |\bepsilon(\bv)|^2,
\]
\end{lemma}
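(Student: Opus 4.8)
The plan is to mimic the contradiction--compactness argument used for Proposition~\ref{prop:coerc}; the only genuinely new ingredient is a geometric rigidity fact that takes the place of Lemma~\ref{lem:tker} (which cannot be invoked here, $\Omega$ now being a solid of revolution). Suppose the asserted inequality fails. Then for each $n\in\polN$ there is $\bv_n\in\calR^\perp\cap\Hund$ with $\bv_n\SCAL\bn_{|\front}=0$, $\|\bv_n\|_{\bL^2(\Omega)}=1$ and $\|\bepsilon(\bv_n)\|_{\bL^2(\Omega)}\le\frac1n$. First I would apply the Korn inequality~\eqref{ineq:korn} to conclude that $\{\bv_n\}$ is bounded in $\Hund$; using the compact embedding $\Hund\hookrightarrow\bL^2(\Omega)$ I extract a subsequence (not relabeled) with $\bv_n\to\bv$ in $\bL^2(\Omega)$ and $\bv_n\rightharpoonup\bv$ in $\Hund$. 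Continuity of $\bv\mapsto\bepsilon(\bv)$ from $\Hund$ into $\bL^2(\Omega)$ then forces $\bepsilon(\bv)=0$, while $\|\bv\|_{\bL^2(\Omega)}=1$. Applying Korn to $\bv_n-\bv$ and using $\bepsilon(\bv_n)\to0$ in $\bL^2(\Omega)$ upgrades the convergence to $\bv_n\to\bv$ strongly in $\Hund$, so the two linear constraints pass to the limit: $\bv\SCAL\bn_{|\front}=0$ and, $\calR^\perp$ being closed in $\bL^2(\Omega)$, $\bv\in\calR^\perp$.

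Next I would identify $\bv$. Since $\bepsilon(\bv)=0$ and $\Omega$ is connected, there are $\bt,\bomega\in\Real^3$ with $\bv=\bt+\bomega\CROSS\bx$, and the crux is to show that such a rigid field, being tangent to $\front$, must in fact belong to $\calR$. The field $\bv$ is globally Lipschitz, hence its flow is a complete one-parameter group of screw motions; $\bv\SCAL\bn=0$ on $\front$ makes this group leave $\overline\Omega$ invariant (flow invariance). Since $\Omega$ is bounded the translational part of the screw along its axis must vanish, so the flow reduces to rotations about a fixed line $D$ mapping $\Omega$ onto itself. If $\bomega\ne0$ then $D$ is a symmetry axis of $\Omega$, hence $D=Oz$ by the uniqueness hypothesis; with the origin on $Oz$ this yields $\bv=\omega\,\be_z\CROSS\bx\in\calR$ for some $\omega\in\Real$. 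If $\bomega=0$ then $\bv=\bt$ is constant with $\bt\SCAL\bn_{|\front}=0$, and integrating $\DIV\big((\bt\SCAL\bx)\,\bt\big)$ over $\Omega$ gives $|\Omega|\,|\bt|^2=0$, i.e.\ $\bv=0$. In both cases $\bv\in\calR$, whence $\bv\in\calR\cap\calR^\perp=\{0\}$, contradicting $\|\bv\|_{\bL^2(\Omega)}=1$. This produces the desired $K$.

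The routine parts — the compact embedding, closedness of $\calR^\perp$, the two uses of Korn, the elementary constant-field identity — should present no difficulty. The delicate point is the rigidity claim of the second paragraph, namely that a nonzero rigid velocity field tangent to the boundary of a bounded connected $C^1$ solid of revolution with unique symmetry axis $Oz$ is necessarily a scalar multiple of $\be_z\CROSS\bx$; the flow-invariance argument sketched above is the cleanest route, and it is the step I would write out most carefully. An alternative is to read it off from the structure fact recalled above — a bounded $C^1$ solid of revolution is a ball or has a single symmetry axis — applied to the near-rigid motions $\bx\mapsto\bx+\tau\bv(\bx)$; I prefer the flow version because it also disposes transparently of the translational/screw component that has to be excluded.
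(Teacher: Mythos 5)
Your proposal is correct and follows essentially the same route as the paper: a contradiction--compactness argument via the Korn inequality, followed by the identification of the $\bepsilon$-free limit as a rigid field, which must lie in $\calR\cap\calR^\perp=\{0\}$. The only difference is that you spell out (via flow invariance of $\overline\Omega$ under the screw motions generated by $\bv$) the rigidity step that the paper merely asserts --- namely that boundedness kills the translational part and that the rotation axis must coincide with the unique symmetry axis $Oz$ --- which is a welcome addition rather than a departure.
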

where we denote $|\bepsilon(\bv)|^2:=\bepsilon(\bv){:}\bepsilon(\bv)$.
\begin{proof}
  The proof is similar to that of Proposition~\ref{prop:coerc}. By
  contradiction, we consider a sequence $\bv_n\in\calR^\perp\cap
  \bH^1(\Omega)$ with vanishing normal component such that
\[
\|\bv_n\|_{\bL^2} = 1\textnormal{ and }
\|\bepsilon(\bv_n)\tr\|_{\bL^2}\leqslant\frac1n.
\]
Using Korn inequality, we can prove that (up to extraction) $\bv_n$
converges in $\bH^1(\Omega)$, and the limit $\bv$ satisfies
\[
\bv\in\calR^\perp,\; \bv\SCAL\bn = 0\textnormal{ and
} \bepsilon(\bv)= 0.
\]
This implies that $\bv$ is the sum of a translation plus a solid
rotation. But $\Omega$ being bounded the
translation is zero and $\bv$ is a solid rotation about $Oz$-axis
(recall that $\beta\not=0$), \ie $\bv\in \calR\cap\calR^\perp=\{0\}$,
which contradicts $\|\bv\|_{\bL^2(\Omega)} = 1$.
\end{proof}

We claim that the Navier-Stokes problem
\eqref{eq:nssym}-\eqref{eq:div} equipped with boundary conditions
\eqref{eq:sfc}-\eqref{eq:nlbc} has spurious stability properties due
to the following proposition.
 \begin{proposition}
   (i) $\calR$ is the global attractor of
   \eqref{eq:nssym}-\eqref{eq:div}-\eqref{eq:init}-\eqref{eq:sfc}-\eqref{eq:nlbc}. (ii)
   No element in $\calR$ is an attractor.
\label{prop:NS_illposed}
\end{proposition}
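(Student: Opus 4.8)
The plan is to decompose the velocity along $\calR$ and $\calR^\perp$ and to show that the $\calR$-component is frozen in time while the $\calR^\perp$-component dies out exponentially. Set $\bw:=\be_z\CROSS\bx$, so that $\calR=\text{span}\{\bw\}$; since $\Omega$ is axisymmetric about $Oz$, the field $\bw$ is smooth, solenoidal and tangent on $\front$ (this is exactly the property already used in the proof of Proposition~\ref{prop:coerc}), hence an admissible test field. Write $\bu=\bu_\calR+\bu_\perp$ with $\bu_\calR:=\omega(t)\,\bw\in\calR$, $\omega(t):=\langle\bu,\bw\rangle/\|\bw\|_{\bL^2}^2$, and $\bu_\perp:=\bu-\bu_\calR\in\calR^\perp$. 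Then $\bu_\perp\SCAL\bn_{|\front}=0$ (both $\bu$ and $\bw$ being tangent), and, $\GRAD\bw$ being skew-symmetric, $\bepsilon(\bw)=0$, so that $\bepsilon(\bu)=\bepsilon(\bu_\perp)$.

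First I would settle (ii), together with the invariance part of (i), by checking that every $\alpha\bw\in\calR$, $\alpha\in\Real$, is a steady solution of \eqref{eq:nssym}-\eqref{eq:div}-\eqref{eq:sfc}-\eqref{eq:nlbc}: indeed $\partial_t(\alpha\bw)=0$, $\DIV(\alpha\bw)=0$, $\bepsilon(\alpha\bw)=0$ (so both the viscous term and the stress-free condition \eqref{eq:sfc} hold trivially), $\alpha\bw\SCAL\bn_{|\front}=0$, and $\alpha\bw\ADV(\alpha\bw)$ is a gradient (a centripetal acceleration), hence absorbed into the pressure. Thus $\calR$ is a set of equilibria, in particular it is invariant; and for (ii), given $\alpha\bw\in\calR$ and $\varepsilon>0$, choose $\alpha'\bw$ with $0<|\alpha-\alpha'|\,\|\bw\|_{\bL^2}<\varepsilon$: the trajectory issued from $\alpha'\bw$ stays at $\alpha'\bw$ for all times and never approaches $\alpha\bw$, so no neighbourhood of $\alpha\bw$ is attracted to it, \ie no element of $\calR$ is an attractor.

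Next I would prove the attracting property in (i). Testing the momentum equation against the \emph{fixed} field $\bw$ --- for which, contrary to the energy identity, no passage to the limit is needed, so equality holds --- every term but $\langle\partial_t\bu,\bw\rangle=\frac{\diff}{\diff t}\langle\bu,\bw\rangle$ vanishes: the pressure term by solenoidality and $\bw\SCAL\bn=0$; the viscous term because $\bepsilon(\bu){:}\GRAD\bw=\bepsilon(\bu){:}\bepsilon(\bw)=0$ in $\Omega$, while on $\front$ the stress-free condition \eqref{eq:sfc} forces $\bn\SCAL\bepsilon(\bu)$ to be purely normal and $\bw$ is tangential; and the convective term because $\int_\Omega u_j(\partial_j u_i)\,w_i=-\int_\Omega u_i u_j\,\partial_j w_i=0$, using $\DIV\bu=0$, $\bu\SCAL\bn_{|\front}=0$ and the skew-symmetry of $\partial_j w_i$ against the symmetric $u_i u_j$. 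Hence $\frac{\diff}{\diff t}\langle\bu,\bw\rangle=0$, \ie $\omega(t)\equiv\omega(0)$ and $\|\bu_\calR(t)\|_{\bL^2}$ is constant. Combining the energy inequality $\frac12\frac{\diff}{\diff t}\|\bu\|_{\bL^2}^2+2\nu\int_\Omega\bepsilon(\bu){:}\bepsilon(\bu)\le0$ from the proof of Proposition~\ref{Prop:1}, the identity $\bepsilon(\bu)=\bepsilon(\bu_\perp)$, and Lemma~\ref{lem:coerc_uperp} applied (for a.e.\ $t$) to $\bu_\perp$, which belongs to $\calR^\perp\cap\bH^1(\Omega)$ and has vanishing normal trace, one gets, since $\|\bu_\calR(t)\|_{\bL^2}$ is constant,
\[
\frac12\frac{\diff}{\diff t}\|\bu_\perp\|_{\bL^2}^2+2K\nu\,\|\bu_\perp\|_{\bL^2}^2\le\frac12\frac{\diff}{\diff t}\|\bu\|_{\bL^2}^2+2\nu\int_\Omega\bepsilon(\bu){:}\bepsilon(\bu)\le0 ,
\]
whence $\|\bu_\perp(t)\|_{\bL^2}\le\|\bu_\perp(0)\|_{\bL^2}\,\textrm{e}^{-2K\nu t}$. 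Therefore $\bu(t)\to\omega(0)\,\bw\in\calR$ in $\bL^2(\Omega)$ for every Leray--Hopf solution, so $\mathrm{dist}_{\bL^2}(\bu(t),\calR)\to0$; together with the invariance of $\calR$ and the fact that any complete bounded orbit must have $\bu_\perp\equiv0$ (otherwise its $\calR^\perp$-part would be unbounded as $t\to-\infty$), this identifies $\calR$ as the global attractor.

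The main obstacle is not conceptual but technical: making the two ``testings'' rigorous at the level of Leray--Hopf solutions --- namely that testing against the fixed $\bw$ is legitimate and yields an \emph{equality} (so that $\langle\bu,\bw\rangle$ is exactly conserved), and that the energy inequality may be combined with Lemma~\ref{lem:coerc_uperp} for a.e.\ time --- which requires some care about the regularity of $\bu_\perp$ and about which form of the energy inequality is invoked. The underlying reason for the pathology is, by contrast, elementary: $\calR$ lies in the kernel of $\bepsilon$ and, $\bw$ being a Killing field, $\int_\Omega(\bz\ADV\bz)\SCAL\bw=0$ for every solenoidal $\bz$ tangent on $\front$; hence the rigid-rotation component of the flow is conserved --- neither damped by viscosity nor altered by the nonlinear transfer of energy.
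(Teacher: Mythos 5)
Your proposal is correct and follows essentially the same route as the paper: the same $\calR\oplus\calR^\perp$ decomposition, Lemma~\ref{lem:coerc_uperp} plus the energy inequality for the exponential decay of the $\calR^\perp$ part, conservation of $\langle\bu,\be_z\CROSS\bx\rangle$ (your direct testing against $\bw$ is exactly the paper's angular-momentum balance, Lemma~\ref{lemme:moment}, specialized to $\varepsilon=0$ and $\bg=0$), and the same family of nearby rigid-rotation equilibria for part (ii). The only differences are cosmetic: you rederive the conservation law inline instead of citing the lemma, and you add a brief remark on complete bounded orbits that the paper omits.
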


\begin{proof} (i)   Let $\bu\in
  L^2((0,+\infty);\bL^2(\Omega))\cap L^\infty((0,+\infty);\bH^1(\Omega))$
  be a Leray-Hopf solution of \eqref{eq:nssym}--\eqref{eq:nlbc} and
  consider the following decomposition:
\[
\bu(t) = \bu^\perp(t)+\lambda(t)\be_z\CROSS\bx,\textnormal{ where }
\bu^\perp(t)\in\calR^\perp,\ \lambda(t)\in\mathbb R,\ \forall t\in [0,+\infty).
\]
$\bu$ being a Leray-Hopf solution implies that
\[
\|\bu^\perp(t)\|_{\bL^2(\Omega)}^2 + \lambda(t)^2
\|\be_z\CROSS\bx\|_{\bL^2(\Omega)}^2 + 4\nu \int_0^t \int_\Omega |\bepsilon(\bu)|^2 \le\|\bu_0\|_{\bL^2(\Omega)}^2.
\]
which owing to Lemma~\ref{lem:coerc_uperp} implies
\[
\|\bu^\perp(t)\|_{\bL^2(\Omega)}^2 + \lambda(t)^2
\|\be_z\CROSS\bx\|_{\bL^2(\Omega)}^2 + 4\nu K \int_0^t
\|\bu^\perp\|_{\bL^2(\Omega)}^2\diff\tau \le\|\bu_0\|_{\bL^2(\Omega)}^2.
\]
Using the Gronwall-Bellmann inequality, we infer that
$\|\bu^\perp(t)\|_{\bL^2(\Omega)} \le \|\bu_0\|_{\bL^2(\Omega)}
\textrm{e}^{-2\nu K t}$.  Invoking Lemma~\ref{lemme:moment} we infer
that $\frac{\diff \lambda(t)}{\diff t}=0$, implying that
$\lambda(t)=\lambda(0)$. In conclusion
\[
\|\bu(t) -\lambda_0 \be_z\CROSS\bx\|_{\bL^2(\Omega)} = \|\bu^\perp(t)\|_{\bL^2(\Omega)} \le \|\bu_0\|_{\bL^2(\Omega)}
\textrm{e}^{-2\nu K t}.
\]
This implies that the global attractor, say $\calA$, is such that
$\calA\subset \calR$, but since $\lambda_0$ spans $\Real$, we
conclude that $\calA=\calR$.

  (ii) Let us consider the solid rotation field $\bu= \omega
  \be_z\CROSS\bx\in \calR$. It is clear that $\bu$ is invariant, \ie
  is a steady-state solution. Let $\bB(\bu,\rho)\in \bH$ be the ball
  centered at $\bu$ of arbitrary radius $\rho>0$. Let
  $\bv=\mu\be_z\CROSS\bx\in \calR$, $\mu\not=0$, be another solid
  rotation and assume that $\mu$ is small enough so that
  $\bu+\bv\in\bB(\bu,\rho)$.  Let us observe that
  \[
  (\bu+\bv)\ADV(\bu+\bv) = 2(\bu+\bv)\SCAL \bepsilon(\bu+\bv)-
  (\bu+\bv)\SCAL(\GRAD(\bu+\bv))^T=-\frac12 \GRAD|\bu+\bv|^2,
  \] since $\bu+\bv$ is a solid rotation; moreover, $\bu+\bv$
  satisfies \refp{eq:div}, \refp{eq:nlbc} and
  $\bepsilon(\bu+\bv)=0$. The property $\bepsilon(\bu)=0$ implies
  \eqref{eq:sfc} and $\DIV(\bepsilon(\bu))=0$. Upon setting $p=\frac12
  |\bu+\bv|^2$ we conclude that $\bu+\bv$ solves
  \eqref{eq:nssym}. This proves that $\bu+\bv$ is invariant (\ie a
  steady-state solution).  In other words $\bu+\bv$ does not converge
  to $\bu$, no matter how small $\rho$ is, thereby proving that the
  set $\{\bu\}$ is not an attractor, no matter how large $\nu$ is.
\end{proof}

\subsection{An admissible stress-free-like boundary condition} 
\label{Sec:admissible}
The principal motivation to consider the so-called stress-free
boundary condition is that it minimizes viscous layers and is thus
less computationally demanding than the no-slip boundary condition.
We have seen above that this boundary condition unfortunately leads to
pathological stability properties when the computational domain is
axisymmetric.  A possible remedy to this problem is to consider the
following non-symmetric boundary condition:
\begin{equation}
\left(\bn\SCAL\GRAD\bu\right)\CROSS\bn_{|\front} = 0.
\label{eq:nsfbc}
\end{equation}
The tangent components of the normal derivative of the velocity field
are zero. The physical interpretation of this condition is definitely
less appealing than that of the stress-free boundary condition, but
once one realizes that the stress-free boundary condition is ad hoc,
one comes to think that \eqref{eq:nsfbc} is not more ad hoc than the
stress-free condition.  The main advantage we see in \eqref{eq:nsfbc}
over the stress-free condition is that it yields standard
stability properties, \ie $\{0\}$ is the global attractor when there
is no forcing.
\begin{lemma} \label{Lem:epsilon_gradient} The following holds for all
  smooth solenoidal vector field $\bu$ that satisfies
  $\left(\bn\SCAL\GRAD\bu\right)\CROSS\bn_{|\front}=0$:
\begin{equation}
\int_\Omega -\DIV(\bepsilon(\bu))\SCAL\bv =\tfrac12 \int_\Omega
\GRAD\bu{:}\GRAD\bv, \qquad \forall \bv\in \Hund,\ \bv\SCAL\bn_{|\front}=0.
\end{equation}
\end{lemma}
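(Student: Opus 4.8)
The plan is to mimic the viscous-term manipulation carried out in the proof of Proposition~\ref{Prop:1}, with $\GRAD\bu$ playing the role that $\bepsilon(\bu)$ played there, and then to check that the new boundary condition \eqref{eq:nsfbc} makes the resulting boundary term disappear.

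First I would use the solenoidality of $\bu$: since $\DIV\bu=0$, one has $\DIV(\GRAD\bu\tr)=\GRAD(\DIV\bu)=0$, hence $\DIV\bepsilon(\bu)=\tfrac12\bigl(\LAP\bu+\GRAD\DIV\bu\bigr)=\tfrac12\LAP\bu$, so that
\[
\int_\Omega -\DIV(\bepsilon(\bu))\SCAL\bv \;=\; -\tfrac12\int_\Omega\LAP\bu\SCAL\bv .
\]
Next, integrating by parts (legitimate since $\bu$ is smooth and $\bv\in\Hund$),
\[
-\int_\Omega\LAP\bu\SCAL\bv \;=\; \int_\Omega\GRAD\bu{:}\GRAD\bv \;-\; \int_\front(\bn\SCAL\GRAD\bu)\SCAL\bv ,
\]
the last term being an ordinary surface integral because $\GRAD\bu$ is continuous up to $\front$ and the trace of $\bv$ lies in $\bL^2(\front)$. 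Thus the whole statement reduces to proving $\int_\front(\bn\SCAL\GRAD\bu)\SCAL\bv=0$.

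This is the one place where \eqref{eq:nsfbc} enters, and it is the crux of the argument, though an easy one. Condition \eqref{eq:nsfbc} says precisely that the vector $\bn\SCAL\GRAD\bu$ is parallel to $\bn$ on $\front$, i.e. $\bn\SCAL\GRAD\bu=\bigl((\bn\SCAL\GRAD\bu)\SCAL\bn\bigr)\bn$ there; on the other hand $\bv\SCAL\bn_{|\front}=0$ means $\bv$ is tangent to $\front$. Hence $(\bn\SCAL\GRAD\bu)\SCAL\bv=\bigl((\bn\SCAL\GRAD\bu)\SCAL\bn\bigr)(\bn\SCAL\bv)=0$ pointwise on $\front$, the boundary integral vanishes, and combining the three displays gives $\int_\Omega-\DIV(\bepsilon(\bu))\SCAL\bv=\tfrac12\int_\Omega\GRAD\bu{:}\GRAD\bv$, which is the asserted identity. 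Note that, in contrast with the computation of Proposition~\ref{Prop:1}, solenoidality of $\bv$ is not needed anywhere.

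I do not anticipate any real difficulty: once one observes that $\bn\SCAL\GRAD\bu$ is normal while $\bv$ is tangential on $\front$, the boundary term is killed by orthogonality. If one prefers to avoid even invoking the $\bL^2(\front)$-trace of $\bv$, one can run the same density argument as in Proposition~\ref{Prop:1}, approximating $\bv$ in $\bH^1(\Omega)$ by smooth fields $\bv_n$ with $\bv_n\SCAL\bn_{|\front}=0$, using the decomposition $\bv_n=(\bn\SCAL\bv_n)\bn-\bn\CROSS(\bn\CROSS\bv_n)$ to see that the boundary term is identically zero for each $n$, and passing to the limit; no regularity beyond $\bv\in\Hund$ is required since $\bu$ is assumed smooth.
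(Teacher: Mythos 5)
Your proposal is correct and follows essentially the same route as the paper: reduce $-\DIV(\bepsilon(\bu))$ to $-\tfrac12\LAP\bu$ using solenoidality, integrate by parts, and kill the boundary term using the fact that $\bn\SCAL\GRAD\bu$ is normal while $\bv$ is tangential on $\front$. The paper phrases that last step via the decomposition $\bv_{|\front}=(\bn\SCAL\bv)\bn-\bn\CROSS(\bn\CROSS\bv)$, which is just a formalization of the orthogonality argument you give.
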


\begin{proof}
Upon observing that $\DIV(\bepsilon(\bu))=\tfrac12 \DIV (\GRAD \bu)$ since $\bu$ is
solenoidal, we infer that
\begin{align*}
\int_\Omega -\DIV(\bepsilon(\bu))\SCAL\bv &= 
\int_\Omega -\tfrac12 \DIV (\GRAD \bu)\SCAL\bv = 
\tfrac12 \int_\Omega  \GRAD \bu{:} \GRAD\bv  - \tfrac12 \int_\front (\bn\SCAL\GRAD \bu)\SCAL\bv \\
& = \tfrac12 \int_\Omega  \GRAD \bu{:} \GRAD\bv 
- \tfrac12 \int_\front (\bn\SCAL\GRAD \bu)\SCAL((\bn\SCAL\bv)\bn) 
+  \tfrac12 \int_\front ((\bn\SCAL\GRAD \bu)\CROSS \bn) \SCAL\left(\bn\CROSS\bv\right)\\
& = \tfrac12 \int_\Omega  \GRAD \bu{:} \GRAD\bv, 
\end{align*}
where we used again the decomposition $\bv_{|\front} = (\bn\SCAL\bv)\bn -
\bn\CROSS\left(\bn\CROSS\bv\right)$.
\end{proof}

\begin{proposition}
  Assume that $\Omega$ is an open, connected, bounded Lipschitz domain, then $\{0\}$
  is the global attractor of
  \eqref{eq:nssym}-\eqref{eq:div}-\eqref{eq:init}-\eqref{eq:nsfbc}-\eqref{eq:nlbc}.
\end{proposition}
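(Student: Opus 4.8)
The plan is to reproduce the energy argument of the proof of Proposition~\ref{Prop:1}, with two modifications: the viscous term is handled through Lemma~\ref{Lem:epsilon_gradient} rather than through the stress-free identity, and the admissibility inequality \eqref{ieq:coerc} --- which need not hold when $\Omega$ is axisymmetric --- is replaced by an ordinary Poincar\'e inequality for vector fields with vanishing normal trace. After recalling the existence of Leray-Hopf solutions (standard Galerkin, details omitted) and observing that $\{0\}$ is trivially an invariant set, one lets $\bu$ be a Leray-Hopf solution with $\bu_0$ in a bounded subset of $\bH$ and tests the momentum equation \eqref{eq:nssym} against a smooth solenoidal field $\bv$ with $\bv\SCAL\bn_{|\front}=0$: the pressure term vanishes by solenoidality and the slip condition, and the time-derivative and transport terms are put in the antisymmetric forms used in the proof of Proposition~\ref{Prop:1}. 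The viscous term $-2\nu\int_\Omega\DIV\bepsilon(\bu)\SCAL\bv$ is rewritten --- using $\DIV\bepsilon(\bu)=\tfrac12\DIV(\GRAD\bu)$, one integration by parts, the boundary condition \eqref{eq:nsfbc}, and the decomposition $\bv_{|\front}=(\bn\SCAL\bv)\bn-\bn\CROSS(\bn\CROSS\bv)$, exactly as in the proof of Lemma~\ref{Lem:epsilon_gradient} --- so that it becomes $\nu\int_\Omega\GRAD\bu{:}\GRAD\bv$. Taking a sequence $\bv_n\to\bu$ in the appropriate norm and passing to the limit (equality being lost, as in Proposition~\ref{Prop:1}) then yields
\[
\tfrac12\frac{\diff}{\diff t}\int_\Omega\bu^2 + \nu\int_\Omega\GRAD\bu{:}\GRAD\bu \le 0 .
\]

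Next I would establish the Poincar\'e-type inequality: there is $C_\Omega>0$ such that $\|\bv\|_{\bL^2(\Omega)}^2\le C_\Omega\int_\Omega\GRAD\bv{:}\GRAD\bv$ for every $\bv\in\Hund$ with $\bv\SCAL\bn_{|\front}=0$. As in the proof of Proposition~\ref{prop:coerc}, one argues by contradiction and compactness: a sequence $\bv_n$ with $\|\bv_n\|_{\bL^2(\Omega)}=1$ and $\int_\Omega\GRAD\bv_n{:}\GRAD\bv_n\to0$ converges, up to extraction and by the compactness of the embedding $\Hund\subset\bL^2(\Omega)$, strongly in $\Hund$ to some $\bv$ with $\GRAD\bv=0$ and $\bv\SCAL\bn_{|\front}=0$; hence $\bv$ equals a constant vector $\bt$. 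The key point is that $\bt=0$: by the divergence theorem, $0=\int_\front(\bx\SCAL\bt)(\bn\SCAL\bt)=\int_\Omega\DIV\!\big((\bx\SCAL\bt)\bt\big)=|\bt|^2\,|\Omega|$, so $\bt=0$, contradicting $\|\bv\|_{\bL^2(\Omega)}=1$. Note that, unlike for \eqref{ieq:coerc}, no Korn inequality intervenes since the full gradient is controlled, and boundedness together with Lipschitz regularity suffices.

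Finally, combining the two displays gives $\tfrac12\frac{\diff}{\diff t}\|\bu\|_{\bL^2(\Omega)}^2+\tfrac{\nu}{C_\Omega}\|\bu\|_{\bL^2(\Omega)}^2\le0$, whence, by the Gronwall lemma, $\|\bu(t)\|_{\bL^2(\Omega)}\le\|\bu_0\|_{\bL^2(\Omega)}\,\textrm{e}^{-\nu t/C_\Omega}$; this converges to $0$ uniformly over bounded sets of initial data, and since $\{0\}$ is invariant it is the global attractor. I expect the \emph{main obstacle} to be the rigorous treatment of the viscous term: a Leray-Hopf solution is not smooth, so Lemma~\ref{Lem:epsilon_gradient} cannot be applied directly with $\bu$ as its first argument, and the integration by parts that activates the boundary condition \eqref{eq:nsfbc} must be performed within the variational formulation before the limit $\bv_n\to\bu$ is taken; the Poincar\'e inequality above is, by comparison, routine.
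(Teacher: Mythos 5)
Your proposal is correct and follows essentially the same route as the paper, whose proof consists precisely of repeating the energy argument of Proposition~\ref{Prop:1} with Lemma~\ref{Lem:epsilon_gradient} handling the viscous term and a Poincar\'e-like inequality (proved by the same compactness argument, Korn being unnecessary since the full gradient is controlled) replacing \eqref{ieq:coerc}. Your divergence-theorem justification that a constant field tangent to the boundary of a bounded domain must vanish is a clean way to supply the detail the paper leaves implicit.
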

  
\begin{proof}
  Repeat the argument in the proof of Proposition~\ref{Prop:1} using
  Lemma~\ref{Lem:epsilon_gradient} together with the following
  Poincar\'e-like inequality
\[
K \int_\Omega \bv^2 \le \int_\Omega |\GRAD\bv|^2,\qquad \forall
\bv\in\Hund,\ \bv\SCAL\bn_{|\front}=0,
\]
which can be shown to hold by proceeding as in the proof of
Proposition~\ref{prop:coerc}.
\end{proof}

\section{Precession driven flow with Poincar\'e stress}
\label{Sec:Poincare}
If the fluid domain is a spheroid that undergoes precession, the
steady state Navier-Stokes equations with the slip condition admit a
so-called Poincar\'e solution. We show in this section that,
independently of the value of the viscosity, the Poincar\'e solution
is not an attractor of the problem if the tangential stress at the
boundary is enforced to be equal to that of the steady-state
Poincar\'e solution.

\subsection{Geometry and equations} \label{Sec:Poincare_setting}
The container is an ellipsoid of revolution of center $O$ and symmetry
axis $Oz$. The unit vector along the $Oz$-axis is $\be_z$. The unit
vectors along the other two orthogonal axes $Ox$ and $Oy$ are $\be_x$
and $\be_y$, respectively.  The surface of the ellipsoid is defined by
the equation
\begin{equation} 
  x^2 + y^2 + (1+\beta)z^2 = 1, \label{def:omega}
\end{equation}
where $\beta>-1$ and $\beta\not=0$.  We assume that the container
rotates about the $Oz$-axis with angular velocity $\be_z$ and that
this frame slowly precesses about the $Ox$-axis with angular velocity
$\varepsilon\be_x$ (this particular precession angle is investigated
in \cite{WR09}).  The non-dimensional Navier-Stokes equations
describing the motion of the fluid in the non-inertial precessing
frame of reference $(O,\be_x,\be_y,\be_z)$ are written as follows:
\begin{align}
\partial_t\bu + \bu\ADV\bu -2\nu\DIV\bepsilon(\bu) + 2\varepsilon\be_x\CROSS\bu + \GRAD p
& =  0, \label{eq:nsp} \\ \DIV \bu & =  0, \label{eq:divp} \\
\bu_{|t=0} &=\bu_0.
\end{align}
We additionally enforce the slip boundary condition,
\begin{equation}
\bu\SCAL\bn_{|\front}=0. \label{eq:slip}
\end{equation}

The system \eqref{eq:nsp}-\eqref{eq:divp}-\eqref{eq:slip} is known to
admit a steady solution called the Poincar\'e flow (see \eg
\cite{WR09}); its expression is:
\begin{equation} \bu_P = -y\be_x +
\left(x-\frac{2\varepsilon}{\beta}(1+\beta)z\right)\be_y +
\frac{2\varepsilon}{\beta}y\be_z.
\end{equation} 
Similarly to \cite{WR09} we consider the problem
\eqref{eq:nsp}-\eqref{eq:divp}-\eqref{eq:slip} equipped with the
additional non-homogeneous boundary condition
\begin{equation}
\left(\bn\SCAL\bepsilon(\bu)\right)\CROSS\bn_{|\front} =
\left(\bn\SCAL\bepsilon(\bu_P)\right)\CROSS\bn_{|\front}.  \label{eq:Poincare_stress}
\end{equation}
That is, we want the tangential component of the normal stress to be
equal to that of the Poincar\'e solution. As mentioned in \cite{WR09},
it is clear that
\begin{claim}[See \cite{WR09}]
  $\bu_P$ is a steady state solution of
  \eqref{eq:nsp}-\eqref{eq:divp}-\eqref{eq:slip}-\eqref{eq:Poincare_stress}.
\end{claim}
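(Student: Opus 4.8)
The plan is to verify the four requirements directly, exploiting the crucial fact that $\bu_P$ is affine in the position vector $\bx$. Equation \eqref{eq:divp} is immediate: $\DIV\bu_P = \partial_x(-y) + \partial_y\!\left(x - \tfrac{2\varepsilon}{\beta}(1+\beta)z\right) + \partial_z\!\left(\tfrac{2\varepsilon}{\beta}y\right) = 0$. Because $\bu_P$ is affine, $\GRAD\bu_P$ is a constant matrix, hence $\bepsilon(\bu_P)$ is a constant tensor and $\DIV\bepsilon(\bu_P) = 0$; in particular the viscous term in \eqref{eq:nsp} vanishes identically. Moreover \eqref{eq:Poincare_stress} holds tautologically, since it merely asks that $(\bn\SCAL\bepsilon(\bu_P))\CROSS\bn$ equal itself on $\front$. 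It therefore remains to check the slip condition \eqref{eq:slip} and the momentum balance \eqref{eq:nsp}; since $\bu_P$ is time-independent, the latter reduces to exhibiting a pressure $p$ with $\bu_P\ADV\bu_P + 2\varepsilon\be_x\CROSS\bu_P + \GRAD p = 0$.

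For \eqref{eq:slip}, observe that on $\front = \{x^2+y^2+(1+\beta)z^2 = 1\}$ the outward normal is parallel to $\GRAD\!\left(x^2+y^2+(1+\beta)z^2\right) = 2\,(x\be_x + y\be_y + (1+\beta)z\be_z)$, so it suffices to compute $\bu_P\SCAL(x\be_x + y\be_y + (1+\beta)z\be_z)$. The $xy$-contributions from the first two components cancel, and the $yz$-contributions from the last two components cancel, leaving $0$; hence $\bu_P\SCAL\bn_{|\front} = 0$.

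For the momentum balance I would use the vector identity $\bu_P\ADV\bu_P = \GRAD\!\left(\tfrac12|\bu_P|^2\right) + (\ROT\bu_P)\CROSS\bu_P$ together with $2\varepsilon\be_x\CROSS\bu_P = (2\varepsilon\be_x)\CROSS\bu_P$, which combine into
\[
\bu_P\ADV\bu_P + 2\varepsilon\be_x\CROSS\bu_P = \GRAD\!\left(\tfrac12|\bu_P|^2\right) + \bomega_0\CROSS\bu_P, \qquad \bomega_0 := \ROT\bu_P + 2\varepsilon\be_x .
\]
Since $\bu_P$ is affine, $\ROT\bu_P = \tfrac{2\varepsilon}{\beta}(2+\beta)\be_x + 2\be_z$ is constant, hence so is $\bomega_0 = \tfrac{4\varepsilon(1+\beta)}{\beta}\be_x + 2\be_z$. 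The only non-trivial point is then to check that the affine field $\bx\mapsto \bomega_0\CROSS\bu_P(\bx)$ is a gradient, i.e. that the constant matrix representing it is symmetric; equivalently one forms the constant matrix $N$ with $N\bx = \bu_P\ADV\bu_P + 2\varepsilon\be_x\CROSS\bu_P$ and checks $N = N\tr$. A short computation shows that all off-diagonal entries of $N$ vanish except the $(1,3)$ and $(3,1)$ ones, whose equality amounts to the single scalar identity $\tfrac{2\varepsilon}{\beta}(1+\beta) = \tfrac{2\varepsilon}{\beta} + 2\varepsilon$, which indeed holds. Writing $\bomega_0\CROSS\bu_P = \GRAD q$ for the corresponding quadratic polynomial $q$ and setting $p := -\tfrac12|\bu_P|^2 - q$, every term of \eqref{eq:nsp} cancels, so $\bu(t)\equiv\bu_P$ solves \eqref{eq:nsp}-\eqref{eq:divp}-\eqref{eq:slip}-\eqref{eq:Poincare_stress}.

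There is no genuine obstacle here: the statement is a direct verification. The only step requiring a little care is arranging the advection and Coriolis contributions so that their sum is manifestly curl-free; the vorticity identity does this cleanly, and the required cancellation hinges precisely on the $\tfrac{2\varepsilon}{\beta}$-coefficients built into $\bu_P$, which are tuned so that the effective rotation vector $\bomega_0$ renders the associated cross-product matrix symmetric.
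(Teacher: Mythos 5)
Your verification is correct, and in fact the paper offers no proof of this claim at all: it is stated as ``clear'' with a pointer to \cite{WR09}, so you have supplied the missing argument rather than duplicated one. All the individual steps check out: $\GRAD\bu_P$ is constant, so the viscous term vanishes and \eqref{eq:Poincare_stress} is tautological; the slip condition follows from the cancellation against the normal $(x,y,(1+\beta)z)$; and the matrix $N$ representing $\bu_P\ADV\bu_P+2\varepsilon\be_x\CROSS\bu_P$ is indeed symmetric precisely because $\tfrac{2\varepsilon}{\beta}(1+\beta)=\tfrac{2\varepsilon}{\beta}+2\varepsilon$, which is the identity that fixes the coefficients of the Poincar\'e flow. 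Your route through the vorticity identity and the effective rotation vector $\bomega_0$ is a clean way to organize this, and it is consistent with the closely related computation the authors do carry out in the proof of Proposition~\ref{prop:NSSF}, where they simply invoke the existence of the pressure $r$ with $\GRAD r=-\bu_P\ADV\bu_P-2\varepsilon\be_x\CROSS\bu_P$ ``since $\bu_P$ solves \eqref{eq:nsp}''---i.e.\ they lean on exactly the fact you have proved.
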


\subsection{Long term stability}
The question that we now want to investigate is whether there is a
threshold on $\nu$ beyond which $\bu_P$ is a stable solution as $t\to
+\infty$; \ie does the flow return to $\bu_P$ independently of the
initial data as $t \to +\infty$ if $\nu$ is large enough?  We show in
this section that the answer to this question is no, the fundamental
reason being that solid rotations cannot be dampened by viscous
dissipation, no matter how large $\nu$ is.
\begin{proposition} For all $\nu>0$, $\{\bu_P\}$ is not an attractor
  of the Navier-Stokes problem \eqref{eq:nsp}-\eqref{eq:divp} equipped
  with the boundary conditions
  \eqref{eq:slip}-\eqref{eq:Poincare_stress}.
\label{prop:NSSF}
\end{proposition}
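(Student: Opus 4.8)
The plan is to repeat, almost verbatim, the argument used for part~(ii) of Proposition~\ref{prop:NS_illposed}: produce a one-parameter family of stationary solutions accumulating at $\bu_P$. Since the ellipsoid \eqref{def:omega} is a solid of revolution about $Oz$, the solid rotation $\be_z\CROSS\bx$ is tangent to $\front$ and satisfies $\DIV(\be_z\CROSS\bx)=0$ and $\bepsilon(\be_z\CROSS\bx)=0$. For $\mu\in\Real$ put $\bv:=\mu\,\be_z\CROSS\bx$ and $\bu_\mu:=\bu_P+\bv$. Then $\DIV\bu_\mu=0$ and $\bu_\mu\SCAL\bn_{|\front}=\bu_P\SCAL\bn_{|\front}+\bv\SCAL\bn_{|\front}=0$, so \eqref{eq:slip} holds; moreover $\bepsilon(\bu_\mu)=\bepsilon(\bu_P)$, so the Poincar\'e stress condition \eqref{eq:Poincare_stress} also holds. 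Only the momentum equation \eqref{eq:nsp} remains to be checked.

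This is where the computation lives. Both $\bu_P$ and $\bu_\mu$ are affine in $\bx$, so $\bepsilon(\bu_\mu)=\bepsilon(\bu_P)$ is a \emph{constant} tensor and $\DIV\bepsilon(\bu_\mu)=\DIV\bepsilon(\bu_P)=0$; this is exactly why $\bu_P$ is a steady state for every $\nu$, and it makes the viscous term disappear from \eqref{eq:nsp} for $\bu_\mu$ as well. Hence it suffices to find a pressure $p_\mu$ with $\bu_\mu\ADV\bu_\mu+2\varepsilon\be_x\CROSS\bu_\mu+\GRAD p_\mu=0$. Writing $\bu_\mu\ADV\bu_\mu=\bu_P\ADV\bu_P+(\bv\ADV\bu_P+\bu_P\ADV\bv)+\bv\ADV\bv$ and subtracting the stationary equation $\bu_P\ADV\bu_P+2\varepsilon\be_x\CROSS\bu_P+\GRAD p_P=0$ satisfied by $\bu_P$, the claim reduces to showing that the field
\[
(\bv\ADV\bu_P+\bu_P\ADV\bv)+\bv\ADV\bv+2\varepsilon\be_x\CROSS\bv
\]
is a gradient. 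The term $\bv\ADV\bv=-\tfrac12\GRAD|\bv|^2$ is a gradient because $\bv$ is a solid rotation. The remaining three terms are each linear in $\bx$, so their sum equals $L\bx$ for some $3\times3$ matrix $L$, and the linear field $L\bx$ is a gradient if and only if $L$ is symmetric. Performing this symmetry check is the one genuine step of the proof: it collapses to a scalar identity that is automatically satisfied by the coefficients of $\bu_P$, because $\bu_P$ is the Poincar\'e flow precessing about $\be_x$ --- note in particular that the construction relies on $\Omega$ being a solid of revolution, which is what makes $\be_z\CROSS\bx$ an admissible perturbation in the first place. I expect this small linear-algebra verification to be the only place where anything has to be computed. (Matrices can be avoided: since $\bv\ADV\bu_P+\bu_P\ADV\bv=\GRAD(\bu_P\SCAL\bv)-\bu_P\CROSS\ROT\bv-\bv\CROSS\ROT\bu_P$ and $\ROT\bv=2\mu\,\be_z$, the claim reduces to $\ROT\bu_P\CROSS\bv+2\mu\,\be_z\CROSS\bu_P+2\varepsilon\,\be_x\CROSS\bv$ being curl free, which follows from a short direct computation.)

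Granting this, every $\bu_\mu$ is a smooth stationary solution of \eqref{eq:nsp}-\eqref{eq:divp}-\eqref{eq:slip}-\eqref{eq:Poincare_stress} lying in $\bH$, and $\|\bu_\mu-\bu_P\|_{\bL^2(\Omega)}=|\mu|\,\|\be_z\CROSS\bx\|_{\bL^2(\Omega)}\to0$ as $\mu\to0$. Since the constant trajectory $t\mapsto\bu_\mu$ is a solution that stays at distance $|\mu|\,\|\be_z\CROSS\bx\|_{\bL^2(\Omega)}>0$ from $\bu_P$ for all time, no ball $\bB(\bu_P,\rho)$ can lie inside the basin of attraction of $\{\bu_P\}$, no matter how large $\nu$ is; hence $\{\bu_P\}$ is not an attractor. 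The only point to handle carefully when writing this up is that the conclusion needs only the existence of the stationary solutions $\bu_\mu$, not any uniqueness statement for Leray--Hopf solutions.
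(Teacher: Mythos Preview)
Your proposal is correct and follows essentially the same route as the paper: perturb $\bu_P$ by a solid rotation $\mu\,\be_z\CROSS\bx$, observe that divergence-freeness, the slip condition, and the Poincar\'e stress condition are preserved since $\bepsilon(\be_z\CROSS\bx)=0$, and then verify that the cross terms $\bu_P\ADV\bv+\bv\ADV\bu_P+2\varepsilon\be_x\CROSS\bv$ form a gradient. The paper simply carries out that last step by explicit componentwise computation (yielding $\GRAD\!\big(-\mu(x^2+y^2)+\tfrac{2\varepsilon\mu}{\beta}(1+\beta)xz\big)$), whereas you defer it to a symmetry check of the associated $3\times3$ matrix; either way the verification is routine and goes through exactly as you anticipate.
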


\begin{proof} Let $\rho>0$ be an arbitrary positive number. Let
  $\bB(\bu_P,\rho)\subset \bH$ be a ball of radius $\rho$ centered at
  $\bu_P$. Let $\bw = \omega\be_z\CROSS\br$ is a solid rotation about
  the $Oz$-axis, and assume that $\omega\not=0$ is small enough so
  that $\bu_P+\bw\in \bB(\bu_P,\rho)$.  Let us prove that $\bu_P +
  \bw$ is a steady state solution of
  \eqref{eq:nsp}-\eqref{eq:divp}-\eqref{eq:slip}-\eqref{eq:Poincare_stress}.
  Owing to $\bepsilon(\bw)=0$, $\bw\SCAL\bn_{|\front}=0$,
  $\DIV\bw=0$, it is clear that $\bu_P+\bw$ is solenoidal and satisfies the
  boundary conditions \eqref{eq:slip}-\eqref{eq:Poincare_stress}.  Let
  us now show that it is possible to find a pressure field so that the
  steady state momentum equation holds.

  Let us first prove that $\bu_P\ADV\bw + \bw\ADV\bu_P +
  2\varepsilon\be_x\CROSS\bw$ is a gradient. A straightforward computation
  gives:
\[
\bu_P\ADV\bw = \omega
\left(\begin{matrix}\frac{2\varepsilon}{\beta}(1+\beta)z- x\\ -y \\
    0 \end{matrix}\right)\!\!, \quad \bw\ADV\bu_P = \omega
\left(\begin{matrix} - x \\ -y
    \\\frac{2\varepsilon}{\beta}x \end{matrix}\right)\!\!, \quad
2\varepsilon\be_x\CROSS\bw = \omega \left(\begin{matrix} 0 \\ 0 \\2\varepsilon
    x \end{matrix}\right)\!\!,
\]
so that
\[
\bu_P\ADV\bw + \bw\ADV\bu_P + 2\varepsilon\be_x\CROSS\bw =
-\GRAD\left(\omega(x^2+y^2)\right) +
\GRAD\left(\frac{2\varepsilon\omega}{\beta}(1+\beta)xz\right).
\]
Let use then define $q(\bx):= -\omega(x^2+y^2) +
\frac{2\varepsilon\omega}{\beta}(1+\beta)xz$. Observe that we can
define the pressure field $r(\bx)$ so that $\GRAD r := -\bu_P\ADV\bu_P
-2\varepsilon \be_x{\times}\bu_P$, since $\bu_P$ solves
\eqref{eq:nsp}. Let us finally observe that
$\bw\ADV\bw=-\frac12\GRAD|\bw|^2$. Then we conclude that $\bu_P+\bw$
solves \eqref{eq:nsp} with $p = q+r-\frac12|\bw|^2$. In particular if
we set $\bu_0=\bu_P+\bw$, then $\bu_P+\bw$ remains a solution forever,
\ie the solution does not converge to $\bu_P$ as $t\to +\infty$, no
matter how small $\rho$ is and no matter how large $\nu$ is.
\end{proof}

\subsection{Angular momentum balance} Let us now mention a result on
the balance of the angular momentum. Let us assume that $\bu$ solves
\eqref{eq:nsp}-\eqref{eq:divp} with the boundary conditions
\begin{align}
  \bn\SCAL\bu & =  0 && \textnormal{ on }\Gamma, \label{eq:boundary1}\\
  \left(\bn\SCAL\bepsilon(\bu)\right)\CROSS\bn & = \bg \CROSS\bn&&
  \textnormal{ on }\Gamma, \label{eq:boundary2}
\end{align}
where the field $\bg$ is a boundary data. Let us now define the
angular momentum
\begin{equation}
  \bM:=\int_\Omega \bx\CROSS\bu.
\end{equation}

\begin{lemma} \label{lemme:moment} Denoting by $M_z$ and $M_y$ the $z$-
  and $y$-component of $\bM$, respectively, all the weak solutions of
  \eqref{eq:nsp}-\eqref{eq:divp}-\eqref{eq:boundary1}-\eqref{eq:boundary2}
  satisfy
\begin{equation}
  \partial_t M_z + \varepsilon M_y = -\int_{\partial\Omega} \nu (\bg\CROSS\bn) \SCAL((\be_z\CROSS\bx)\CROSS\bn),
  \qquad \text{a.e. } t\in (0,+\infty).
\label{eq:moment}
\end{equation}
\end{lemma}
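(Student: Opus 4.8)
The plan is to test the weak momentum equation \eqref{eq:nsp} against the \emph{fixed} solid‑rotation field $\bw := \be_z\CROSS\bx$ and to read off \eqref{eq:moment} term by term. The first thing to settle is that $\bw$ is an admissible test field: it is smooth, solenoidal ($\DIV\bw=0$), satisfies $\bepsilon(\bw)=0$, and---the container being a solid of revolution about $Oz$---is tangent on $\front$, \ie $\bw\SCAL\bn_{|\front}=0$; this tangency is precisely what places $\bw$ in the test space underlying the weak formulation. Since $\int_\Omega\bu\SCAL\bw=\int_\Omega(\bx\CROSS\bu)\SCAL\be_z=M_z$ and $\bw$ does not depend on $t$, the time‑derivative term of the tested equation reproduces $\partial_t M_z$; because $t\mapsto M_z(t)$ is only known to be absolutely continuous for a Leray--Hopf solution, the resulting identity holds for a.e.\ $t$, which is exactly the regularity asserted in the statement.

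Next I would dispose of the terms that vanish. The pressure contributes $\int_\Omega\GRAD p\SCAL\bw = -\int_\Omega p\,\DIV\bw + \int_\front p\,\bw\SCAL\bn = 0$. The inertia term is zero because $\bw$ is a solid rotation: $\bu\ADV\bw = \be_z\CROSS\bu$ (using $\bu\ADV\bx=\bu$), which is orthogonal to $\bu$, so by the skew‑symmetry of the trilinear form---valid since $\DIV\bu=0$ and $\bu\SCAL\bn_{|\front}=0$, exactly as in the proof of Proposition~\ref{Prop:1}---we get $\int_\Omega\bu\ADV\bu\SCAL\bw = -\int_\Omega\bu\ADV\bw\SCAL\bu = -\int_\Omega(\be_z\CROSS\bu)\SCAL\bu = 0$. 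Integrating the viscous term by parts gives $\int_\Omega(-2\nu\DIV\bepsilon(\bu))\SCAL\bw = 2\nu\int_\Omega\bepsilon(\bu){:}\bepsilon(\bw) - 2\nu\int_\front(\bn\SCAL\bepsilon(\bu))\SCAL\bw$, and the volume integral drops out since $\bepsilon(\bw)=0$, leaving only the boundary stress term.

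It then remains to identify the two surviving contributions. For the Coriolis term, $(\be_x\CROSS\bu)\SCAL(\be_z\CROSS\bx) = (\be_x\SCAL\be_z)(\bu\SCAL\bx) - (\be_x\SCAL\bx)(\bu\SCAL\be_z) = -x\,u_z$, writing $u_x,u_y,u_z$ for the Cartesian components of $\bu$; on the other hand $M_y=\int_\Omega(\bx\CROSS\bu)\SCAL\be_y = \int_\Omega(z\,u_x - x\,u_z)$, and since $\int_\Omega(z\,u_x + x\,u_z) = \int_\Omega\DIV(xz\,\bu) = \int_\front xz\,(\bu\SCAL\bn) = 0$ we obtain $M_y = -2\int_\Omega x\,u_z$, whence $2\varepsilon\int_\Omega(\be_x\CROSS\bu)\SCAL\bw = \varepsilon M_y$. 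For the boundary stress, $\bw$ being tangent on $\front$, one has $(\bn\SCAL\bepsilon(\bu))\SCAL\bw = ((\bn\SCAL\bepsilon(\bu))\CROSS\bn)\SCAL(\bw\CROSS\bn)$ (crossing with $\bn$ maps tangent vectors to tangent vectors by a rotation and hence preserves their scalar product); substituting the boundary condition \eqref{eq:boundary2} turns this into $(\bg\CROSS\bn)\SCAL((\be_z\CROSS\bx)\CROSS\bn)$. Assembling $\partial_t M_z$, the Coriolis contribution $\varepsilon M_y$, and the surface integral then reproduces the angular‑momentum balance \eqref{eq:moment}.

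The computation itself is short, so the main point requiring care is the functional‑analytic bookkeeping for weak solutions: one must make sure that the Leray--Hopf formulation of \eqref{eq:nsp}-\eqref{eq:divp}-\eqref{eq:boundary1}-\eqref{eq:boundary2} is set up so that the smooth, time‑independent, tangential field $\bw$ is a legitimate test function---either included in the Galerkin basis or recovered by density---and that the boundary stress term remains meaningful along the approximating sequence, so that the identity passes to the limit for a.e.\ $t$. In contrast with the energy estimate of Proposition~\ref{Prop:1}, no equality is lost in passing to weak solutions here, since $\bw$ is a fixed smooth test field rather than the solution itself; the obstacle is thus organizational rather than analytical.
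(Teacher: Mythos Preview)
Your proposal is correct and follows essentially the same route as the paper: test the momentum equation against the solid rotation $\bw=\be_z\CROSS\bx$ and identify each term. The only cosmetic differences are that the paper dispatches the inertia, viscous, and pressure volume integrals in one stroke via the observation that $\GRAD(\be_z\CROSS\bx)$ is anti-symmetric while $\bu\otimes\bu$, $\bepsilon(\bu)$, and $pI$ are symmetric, and it packages your inline Coriolis computation (the $\int_\Omega\DIV(xz\,\bu)=0$ trick) as a separate lemma; your skew-symmetry argument for the inertia and your direct computation for the Coriolis term are equivalent rephrasings.
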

\begin{proof}
  Observing that $M_z = \int_\Omega (\be_z\CROSS\bx)\SCAL \bu$, we
  multiply \refp{eq:nsp} by $\be_z\CROSS\bx$ and integrate over
  $\Omega$. Using the divergence free condition together with
  \eqref{eq:boundary1} and integrating by parts, we infer that
\begin{align*}
  \int_\Omega (\be_z\CROSS\bx)\SCAL (\bu\ADV\bu) & = \int_\Omega
  \DIV(\bu\otimes\bu)\SCAL(\be_z\CROSS\bx) = \int_{\partial\Omega}
  \left(\bu\SCAL\bn\right)\left(\bu\SCAL\left(\be_z\CROSS\bx\right)\right)=0,
\end{align*}
where we used that $(\bu\otimes\bu){:}\GRAD(\be_z\CROSS\bx)=0$ since
the matrix $\bu\otimes\bu$ is symmetric and $\GRAD(\be_z\CROSS\bx)$ is
anti-symmetric. The same argument applies to the viscous term
\[
  \int_\Omega (\be_z\CROSS\bx)\SCAL \nu\DIV(\bepsilon(\bu))
  =  \int_{\partial\Omega} \nu (\bepsilon(\bu)\SCAL\bn)\SCAL(\be_z\CROSS\bx)
  =  \int_{\partial\Omega} \nu (\bg\CROSS\bn) \SCAL((\be_z\CROSS\bx)\CROSS\bn), 
\]
where we used $\be_z\CROSS\bx =(\be_z\CROSS\bx)\CROSS\bn$ since
$(\be_z\CROSS\bx)\SCAL\bn_{|\front}=0$. The same argument applies again
for the pressure term since $\GRAD p =\DIV (p I)$ where $I$ is the
identity matrix.
\[
  \int_\Omega (\be_z\CROSS\bx)\SCAL \GRAD p  = \int_{\partial\Omega}
  p(\be_z\CROSS\bx)\SCAL\bn =0.
\]
We now deal with the Coriolis term by applying Lemma~\ref{Lem:My}: 
\[
\int_\Omega (\be_z\CROSS\bx)\SCAL(\be_x\CROSS\bu) =  \frac12\int_\Omega
\be_y\SCAL (\bx\CROSS\bu)=\frac12 M_y.
\]
The conclusion follows readily.
\end{proof}

\begin{lemma} 
\label{Lem:My} 
Let $\bv\in\bL^1(\Omega)$ be an integrable vector field 
such that $\DIV\bv=0$ and $\bv\SCAL\bn_{|\front=0}$, then
\begin{equation}
\int_\Omega\be_y\SCAL (\bx\CROSS\bv) =2\int_\Omega (\be_z\CROSS\bx)\SCAL(\be_x\CROSS\bv).
\end{equation}
\end{lemma}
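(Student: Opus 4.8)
The plan is to reduce the vector identity to a single scalar integral and then recognize that integral as a boundary flux that vanishes.

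First I would write everything in Cartesian coordinates, $\bx = x\be_x+y\be_y+z\be_z$ and $\bv = v_1\be_x+v_2\be_y+v_3\be_z$. By the cyclic property of the scalar triple product, $\be_y\SCAL(\bx\CROSS\bv) = (\be_y\CROSS\bx)\SCAL\bv$, and since $\be_y\CROSS\bx = z\be_x - x\be_z$ this equals $z v_1 - x v_3$. For the right-hand side I would apply the Lagrange identity $(\be_z\CROSS\bx)\SCAL(\be_x\CROSS\bv) = (\be_z\SCAL\be_x)(\bx\SCAL\bv) - (\be_z\SCAL\bv)(\bx\SCAL\be_x)$, which collapses to $-x v_3$. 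Therefore the asserted identity is equivalent to $\int_\Omega (z v_1 - x v_3) = -2\int_\Omega x v_3$, that is, to $\int_\Omega (z v_1 + x v_3) = 0$.

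Second, I would note that $z v_1 + x v_3 = \GRAD(xz)\SCAL\bv$, because $\GRAD(xz) = z\be_x + x\be_z$; and since $\DIV\bv = 0$ this is exactly $\DIV(xz\,\bv)$. Integrating over $\Omega$ and using the divergence theorem together with $\bv\SCAL\bn_{|\front}=0$ gives $\int_\Omega \DIV(xz\,\bv) = \int_\front xz\,(\bv\SCAL\bn) = 0$, which closes the argument.

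The only subtlety is regularity: here $\bv$ is only assumed to lie in $\bL^1(\Omega)$, so $\DIV\bv=0$ and $\bv\SCAL\bn_{|\front}=0$ must be understood in the distributional/normal-trace sense and the divergence theorem used above is not literally available at that level of regularity. I would handle this in the standard way: prove the identity first for smooth solenoidal fields with vanishing normal component, for which it is elementary integration by parts, and then pass to the general case by density, using that both sides are continuous linear functionals of $\bv$ for the $\bL^1(\Omega)$ norm (since $\Omega$ is bounded, so $x,y,z$ are bounded on $\overline\Omega$). Equivalently, one may simply read $\int_\Omega \GRAD(xz)\SCAL\bv = 0$ as the very definition of "$\DIV\bv=0$ with vanishing normal trace" tested against the smooth function $xz$. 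In every place where this lemma is invoked in the paper $\bv$ in fact belongs to $\bH^1(\Omega)$, so the approximation step can be skipped. I expect this density/trace bookkeeping to be the only mild obstacle; the algebraic heart of the lemma is the two-line coordinate computation above.
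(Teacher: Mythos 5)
Your proposal is correct and follows essentially the same route as the paper: both reduce the identity to the single scalar fact $\int_\Omega (z v_x + x v_z) = \int_\Omega \bv\SCAL\GRAD(xz) = 0$, obtained from $\DIV\bv=0$ and the vanishing normal trace. Your additional remark about the $\bL^1$ regularity and the density/normal-trace interpretation is a legitimate refinement that the paper's proof silently glosses over.
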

\begin{proof}
Let us first observe that $ \int_\Omega
(\be_z\CROSS\bx)\SCAL(\be_x\CROSS\bv) = - \int_\Omega x u_z$.
Noticing that $\int_\Omega x v_z + z v_x = \int_\Omega
\bv\SCAL\GRAD(zx)=0$ since $\DIV\bv=0$ and $\bv\SCAL\bn_{|\front}=0$,
we infer that
\[
\int_\Omega (\be_z\CROSS\bx)\SCAL(\be_x\CROSS\bv) = - \int_\Omega x
u_z = \frac12\int_\Omega zv_x-xv_z = \frac12\int_\Omega
\be_y\SCAL (\bx\CROSS\bv),
\]
which concludes the proof.
\end{proof}

\begin{remark}
  If we choose $\bg = \bepsilon(\bu_P)\SCAL\bn$ like in
  \eqref{eq:Poincare_stress}, then $ -\int_{\partial\Omega} \nu
  (\bg\CROSS\bn) \SCAL((\be_z\CROSS\bx)\CROSS\bn)$ is equal to
  $-\int_\Omega (\be_z\CROSS\bx)\SCAL \nu\DIV(\bepsilon(\bu_P))=0$ and
  the balance equation of the angular momentum in the $z$ direction
  simplifies to $\partial_t M_z + \varepsilon M_y=0$.
\end{remark}

\begin{remark}
  Note that \eqref{eq:moment} is just a consequence of
  \eqref{eq:nsp}-\eqref{eq:divp}-\eqref{eq:boundary1}-\eqref{eq:boundary2}.
  This balance holds whether the long term stability of
  \eqref{eq:nsp}-\eqref{eq:divp}-\eqref{eq:boundary1}-\eqref{eq:boundary2}
  is spurious or not. It is false to consider that \eqref{eq:moment}
  is an additional equation that fixes the long term stability
  behavior of
  \eqref{eq:nsp}-\eqref{eq:divp}-\eqref{eq:slip}-\eqref{eq:Poincare_stress}.
\end{remark}

\section{Precession driven flow with stress-free boundary conditions}
\label{Sec:four}
We show in this section that if we enforce
$\bepsilon(\bu)\SCAL\bn_{|\front}=0$, instead of enforcing
$\bepsilon(\bu)\SCAL\bn_{|\front}=\bepsilon(\bu_P)\SCAL\bn_{|\front}$
in \eqref{eq:nsp}-\eqref{eq:divp}-\eqref{eq:boundary1}, then $0$
becomes the unique stable solution as $t\to +\infty$, \ie $\{0\}$ is
the global attractor.

\subsection{Long time stability}
The setting of the problem is the same as in
Section~\ref{Sec:Poincare_setting} except that we enforce the
tangential component of the normal stress to be zero at the boundary.
\begin{align}
  \partial_t\bu + \bu\SCAL\GRAD\bu - 2\nu\DIV\bepsilon(\bu) +
  2\varepsilon\be_x\CROSS\bu + \GRAD p &= 0 &\textnormal{ in
  }\Omega \label{eq:pdfsf1} \\ \DIV\bu &= 0
  &\textnormal{ in }\Omega \label{eq:pdfsf2} \\ \bn\SCAL\bu &=0 &\textnormal{ on }\Gamma  \label{eq:pdfsf3}\\
  \left(\bn\SCAL\bepsilon(\bu)\right)\CROSS\bn&=
  0 &\textnormal{ on }\Gamma \label{eq:pdfsf4} \\
  \bu_{|t=0} &= \bu_0 &\textnormal{ in }\Omega. \label{eq:pdfsf5} 
\end{align}
The result that we want to emphasize is that contrary to what we
observed in Section~\ref{Sec:Poincare}, $0$ becomes the unique stable
solution of \eqref{eq:pdfsf1}--\eqref{eq:pdfsf5} as $t\to +\infty$.
The main result that we want to prove here is that any solution of the
system \eqref{eq:pdfsf1}-\eqref{eq:pdfsf4} returns to rest as $t\to
+\infty$. The key argument is that solid rotations about the $Oz$ axis
are not stationary solutions of \eqref{eq:pdfsf1}. This fact has been
mentioned in \cite{WR09} without proof.
\begin{theorem} \label{thm:cvg} $\{0\}$ is the global attractor of
\eqref{eq:pdfsf1}--\eqref{eq:pdfsf5}.
\end{theorem}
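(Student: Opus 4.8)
The plan is to mimic the proof of Proposition~\ref{prop:NS_illposed}, the new ingredient being that the residual solid rotation about $Oz$ is now killed by the Coriolis force. Let $\bu$ be a Leray-Hopf solution of \eqref{eq:pdfsf1}--\eqref{eq:pdfsf5} and decompose it $\bL^2(\Omega)$-orthogonally as $\bu(t)=\bu^\perp(t)+\lambda(t)\be_z\CROSS\bx$ with $\bu^\perp(t)\in\calR^\perp$ and $\lambda(t)\in\Real$. Since $\bepsilon(\be_z\CROSS\bx)=0$ one has $\bepsilon(\bu)=\bepsilon(\bu^\perp)$; since the Coriolis force does no work and the homogeneous condition \eqref{eq:pdfsf4} makes the viscous boundary term vanish (as in the proof of Proposition~\ref{Prop:1}), the energy inequality reads
\[
\|\bu^\perp(t)\|_{\bL^2(\Omega)}^2+\lambda(t)^2\|\be_z\CROSS\bx\|_{\bL^2(\Omega)}^2+4\nu\int_0^t\|\bepsilon(\bu^\perp)\|_{\bL^2(\Omega)}^2\,\diff\tau\le\|\bu_0\|_{\bL^2(\Omega)}^2 .
\]
Invoking Lemma~\ref{lem:coerc_uperp} and the Gronwall--Bellman inequality exactly as in the proof of Proposition~\ref{prop:NS_illposed}(i) gives $\|\bu^\perp(t)\|_{\bL^2(\Omega)}\le\|\bu_0\|_{\bL^2(\Omega)}\textrm{e}^{-2\nu Kt}$, so $\bu^\perp\to0$ exponentially, $\lambda$ is bounded, and $\int_0^{+\infty}\|\bepsilon(\bu^\perp)\|_{\bL^2(\Omega)}^2\,\diff\tau<+\infty$. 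It remains to prove that $\lambda(t)\to0$.

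Next I would use the angular momentum balance. Lemma~\ref{lemme:moment} with $\bg=0$ yields $\partial_tM_z+\varepsilon M_y=0$ a.e.\ in $(0,+\infty)$. Because $\be_z\CROSS\bx$ spans $\calR$ one has $M_z(t)=\lambda(t)\|\be_z\CROSS\bx\|_{\bL^2(\Omega)}^2$; and since $\Omega$ is invariant under $z\mapsto-z$ we get $\int_\Omega(\be_y\CROSS\bx)\SCAL(\be_z\CROSS\bx)=-\int_\Omega yz=0$, i.e.\ $\be_y\CROSS\bx\in\calR^\perp$, so $M_y(t)=\int_\Omega\bu^\perp(t)\SCAL(\be_y\CROSS\bx)$ and $|M_y(t)|\le C\textrm{e}^{-2\nu Kt}$. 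Hence $|\partial_t\lambda(t)|\le C'\textrm{e}^{-2\nu Kt}$, so $\lambda'\in L^1(0,+\infty)$ and $\lambda(t)\to\lambda_\infty$ for some $\lambda_\infty\in\Real$.

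The crux is to show $\lambda_\infty=0$; this is the analytic incarnation of the fact, stated in \cite{WR09}, that solid rotations about $Oz$ are not steady solutions of \eqref{eq:pdfsf1}, the obstruction being that the Coriolis force they feel, $2\varepsilon\be_x\CROSS(\lambda\be_z\CROSS\bx)=2\varepsilon\lambda\,x\be_z$, is not a gradient ($\ROT(2\varepsilon\lambda\,x\be_z)=-2\varepsilon\lambda\,\be_y$). I would detect this with the test field $\boldsymbol\psi:=-(1+\beta)y^2z\,\be_x+xy^2\,\be_z$: a direct check shows $\DIV\boldsymbol\psi=0$; $\boldsymbol\psi\SCAL\bn_{|\front}=0$ because $\bn$ is colinear with $(x,y,(1+\beta)z)$ on $\front$ while $\boldsymbol\psi\SCAL(x,y,(1+\beta)z)\equiv0$; $\boldsymbol\psi\in\calR^\perp$ because $\int_\Omega\boldsymbol\psi\SCAL(\be_z\CROSS\bx)=(1+\beta)\int_\Omega y^3z=0$; and $c_0:=\int_\Omega x\,\psi_z=\int_\Omega x^2y^2>0$. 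Testing the weak form of \eqref{eq:pdfsf1} against $\boldsymbol\psi$ — the pressure and viscous boundary terms drop out since $\boldsymbol\psi$ is tangent and \eqref{eq:pdfsf4} holds; the $\lambda^2$ part of the transport term vanishes because $(\be_z\CROSS\bx)\ADV(\be_z\CROSS\bx)=-\tfrac12\GRAD(x^2+y^2)$ is a gradient paired with a solenoidal tangent field; and the solid-rotation part of the Coriolis term equals $2\varepsilon c_0\lambda(t)$ — we obtain
\[
\frac{\diff}{\diff t}\int_\Omega\bu\SCAL\boldsymbol\psi+2\varepsilon c_0\,\lambda(t)=g(t),\qquad |g(t)|\le C\textrm{e}^{-2\nu Kt}+2\nu\|\bepsilon(\boldsymbol\psi)\|_{\bL^2(\Omega)}\|\bepsilon(\bu^\perp(t))\|_{\bL^2(\Omega)},
\]
where $g$ collects exactly the transport, viscous and Coriolis contributions that involve $\bu^\perp$. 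Integrating over $[t,t+1]$ and letting $t\to+\infty$: the endpoint terms vanish because $\int_\Omega\bu\SCAL\boldsymbol\psi=\int_\Omega\bu^\perp\SCAL\boldsymbol\psi\to0$; $\int_t^{t+1}\lambda\to\lambda_\infty$; and $\int_t^{t+1}|g|\to0$ by the Cauchy--Schwarz inequality together with $\int_0^{+\infty}\|\bepsilon(\bu^\perp)\|_{\bL^2(\Omega)}^2<+\infty$. Hence $2\varepsilon c_0\lambda_\infty=0$, i.e.\ $\lambda_\infty=0$. Consequently $\|\bu(t)\|_{\bL^2(\Omega)}^2=\|\bu^\perp(t)\|_{\bL^2(\Omega)}^2+\lambda(t)^2\|\be_z\CROSS\bx\|_{\bL^2(\Omega)}^2\to0$, and since $0$ is an invariant state and every trajectory is attracted to it, $\{0\}$ is the global attractor.

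I expect the main obstacle to be this last limiting argument carried out at the level of Leray-Hopf solutions: the viscous contribution $\int_\Omega\bepsilon(\bu^\perp){:}\bepsilon(\boldsymbol\psi)$ is only square-integrable in time, not pointwise small, which is why the balance for $\int_\Omega\bu\SCAL\boldsymbol\psi$ has to be integrated on a sliding unit window rather than evaluated pointwise; one must also be careful to use the cancellation of the $\lambda^2$ part of the transport term \emph{before} estimating, since $\|\bu(t)\|_{\bL^2(\Omega)}$ itself does not tend to zero a priori. Producing the test field $\boldsymbol\psi$ with all four required properties is the other key point, and here the explicit polynomial above works precisely because $\Omega$ is an ellipsoid of revolution.
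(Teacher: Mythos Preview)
Your proof is correct and follows the same overall architecture as the paper's: the orthogonal decomposition $\bu=\bu^\perp+\lambda\,\be_z\CROSS\bx$, the energy inequality with Lemma~\ref{lem:coerc_uperp} to obtain exponential decay of $\bu^\perp$, the angular--momentum balance (Lemma~\ref{lemme:moment}) to prove $\lambda(t)\to\lambda_\infty$, and finally testing the weak formulation against a well--chosen solenoidal field, integrated over a sliding unit window, to force $\lambda_\infty=0$.

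The one genuine difference lies in the last step. The paper tests against an abstract compactly supported $\bvarphi\in\pmb{\mathcal D}(\Omega)$ with $\int_\Omega(\be_z\CROSS\bx)\SCAL(\bvarphi\CROSS\be_x)\ne 0$; compact support makes all boundary terms disappear and, more importantly, lets the viscous contribution be written as $\int_\Omega\bu^\perp\SCAL\DIV\bepsilon(\bvarphi)$, which is bounded pointwise by $C\|\bu^\perp(t)\|_{\bL^2}\le C\textrm{e}^{-2\nu Kt}$. You instead construct an explicit polynomial field $\boldsymbol\psi$ that is tangent on $\front$ but not compactly supported; you then handle the boundary terms via \eqref{eq:pdfsf4} and are left with a viscous remainder $\int_\Omega\bepsilon(\bu^\perp){:}\bepsilon(\boldsymbol\psi)$ that is only square--integrable in time, which you correctly absorb by Cauchy--Schwarz on $[t,t+1]$ together with $\int_0^\infty\|\bepsilon(\bu^\perp)\|^2<\infty$. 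Your route is more constructive and makes the ``Coriolis on a rigid rotation is not a gradient'' mechanism completely explicit; the paper's route is slicker on the analysis side, since it never needs the $L^2_t$ control of $\bepsilon(\bu^\perp)$ and yields a clean exponential bound directly.
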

\begin{proof} Let us start by observing that $\{0\}$ is indeed an
  invariant set of \eqref{eq:pdfsf1}--\eqref{eq:pdfsf5}.  Let
  $\bB(0,\rho)$ be the unit ball in $\bH$ centered at $0$ and of
  radius $\rho$. Let $\bu_0\in \bB(0,\rho)$ and let $\bu\in
  L^2((0,+\infty);\bL^2(\Omega))\cap
  L^\infty((0,+\infty);\bH^1(\Omega))$ be a Leray-Hopf solution of
  \eqref{eq:pdfsf1}--\eqref{eq:pdfsf5} and consider the following
  decomposition:
\[
\bu(t) = \bu^\perp(t)+\lambda(t)\be_z\CROSS\bx,\textnormal{ where }
\bu^\perp(t)\in\calR^\perp,\ \lambda(t)\in\mathbb R,\ \forall t\in [0,+\infty).
\]
Lemma~\ref{lem:coerc_uperp} together with $\bu$ being a Leray-Hopf
solution implies that
\[
\|\bu^\perp(t)\|_{\bL^2(\Omega)}^2 + \gamma \lambda(t)^2
 + 4\nu K \int_0^t
\|\bu^\perp(\tau)\|_{\bL^2(\Omega)}^2\diff\tau \le
  \|\bu_0\|_{\bL^2(\Omega)}^2.
\]
where $\gamma=\|\be_z\CROSS\bx\|_{\bL^2}^2$.
Using the Gronwall-Bellmann inequality, we infer that
$\|\bu^\perp(t)\|_{\bL^2(\Omega)} \le \|\bu_0\|_{\bL^2(\Omega)}
\textrm{e}^{-2\nu K t}$.

Let $t_2 > t_1$ in $(0,+\infty)$, then \eqref{eq:moment} means that
\begin{align*}
  (\lambda(t_2)-\lambda(t_1))\gamma =
  -\varepsilon \int_{t_1}^{t_2}\int_\Omega
  \be_y\SCAL(\bx\CROSS(\lambda(\tau)(\be_z\CROSS\bx) + \bu^\perp)).
\end{align*}
But Lemma~\ref{Lem:My} implying that
\begin{align*}
\int_\Omega \be_y\SCAL(\bx\CROSS(\lambda(\tau)(\be_z\CROSS\bx)))
&=\lambda(\tau)\int_\Omega \be_y\SCAL(\bx\CROSS(\be_z\CROSS\bx)) \\
&= 2 \lambda(\tau)\int_\Omega (\be_z\CROSS\bx)\SCAL(\be_x\CROSS(\be_z\CROSS\bx))
=0,
\end{align*}
we finally infer that
\begin{align*}
  |\lambda(t_2)-\lambda(t_1)|
\le \gamma^{-1} \varepsilon \int_{t_1}^{t_2}\int_\Omega
  |\be_y\SCAL(\bx\CROSS \bu^\perp)|\le
  c\, (\textrm{e}^{-2\nu Kt_1} - \textrm{e}^{-2\nu K t_2}),
\end{align*}
where $c$ is a generic constant that depends on $\Omega$, $\nu$, and
$\rho$ and may vary at each occurrence from now on. Note in passing
that this also proves that $\lambda(t)$ converges to a real number
$\lambda_\infty$ as $t\to +\infty$, and $|\lambda_\infty
-\lambda(t)|\le c\, \textrm{e}^{-2\nu K t}$.

Let us take $\bvarphi\in\pmb{\mathcal D}(\Omega)$ independent of time
and divergence-free. Since $\bu$ is a Leray-Hopf solution (recall
$t\longmapsto \bu(t)$ is continuous in the $\bL^2$-weak topology) we
have
\begin{align*}
  0 &=\int_\Omega (\bu(t_2,\bx)-\bu(t_1,\bx))
  \SCAL\bvarphi(\bx)\diff\bx + \int_{t_1}^{t_2} \int_\Omega
  2\varepsilon\bu(\tau,\bx)\SCAL(\bvarphi(\bx)\CROSS\be_x)\diff\bx\diff\tau
  \\ &- \int_{t_1}^{t_2} \int_\Omega
  2\nu\bu(\tau,\bx)\SCAL\DIV\bepsilon(\bvarphi)\diff\bx\diff\tau -
  \int_{t_1}^{t_2} \int_\Omega
  (\bu(\bx)\otimes\bu(\bx)){:}\GRAD\bvarphi(\bx)\textnormal{d}\bx\diff\tau.
\end{align*}
Let us now set $t_2=t_1+1$. Upon observing that $\int_\Omega
((\be_z\CROSS\bx)\otimes(\be_z\CROSS\bx)){:}\GRAD\bvarphi(\bx)\textnormal{d}\bx=0$
and $\int_\Omega (\be_z\CROSS\bx)\SCAL\DIV\bepsilon(\bvarphi)\diff\bx=0$.
This implies that there is a constant $c(\bvarphi)\ge 0$ so that
\begin{align*}
  2 \varepsilon\left|
  \int_{t_1}^{t_2}\lambda(\tau)\diff\tau \int_\Omega
  (\be_z\CROSS\bx)\SCAL(\bvarphi(\bx)\CROSS\be_x)\diff\bx \right|
& \le  \left| (\lambda(t_2)-\lambda(t_1))\int_\Omega
  (\be_z\CROSS\bx)\SCAL\bvarphi(\bx)\diff\bx\right| \\
& + c(\bvarphi) \textrm{e}^{-2\nu K t_1}.
\end{align*}
Let us choose $\bvarphi$ so that $2 \varepsilon\int_\Omega
(\be_z\CROSS\bx)\SCAL(\bvarphi(\bx)\CROSS\be_x)\diff\bx =1$. The
above estimate implies that
\[
\left|\int_{t_1}^{t_1+1}\lambda(\tau)\diff\tau\right|
\le c(\bvarphi) \textrm{e}^{-2\nu K t}.
\]
This in turn implies that $ \lambda_\infty =\lim_{t_1\to \infty}
\int_{t_1}^{t_1+1}\lambda(\tau)\diff \tau =0$, which means
$\lambda_\infty =0$.  In conclusion 
\begin{equation} \label{Energy_precession}
\lim_{t\to +\infty}
\|\bu(t)\|_{\bL^2(\Omega)} \le c(\bvarphi)\, \lim_{t\to +\infty}
\textrm{e}^{-2\nu K t} = 0,
\end{equation} 
which concludes the proof.
\end{proof}

\section{Numerical illustrations}\label{Sec:Numerical}
To illustrate the above mathematical results, we have performed two
series of numerical simulations similar to those presented
in~\cite{WR09}.
The authors study therein the dynamo action in an
oblate spheroid defined by equation~\eqref{def:omega} with
$\beta=0.5625$ (this corresponds to the value $b=0.8$ for the
semi-minor axis used in~\cite{WR09}, $b := (1+\beta)^{-\frac12}$). This spheroid
rotates about the $Oz$-axis and precesses about the $Ox$-axis. Two sets
of boundary conditions are considered: either the homogeneous
stress-free boundary or the Poincar\'e stress condition is enforced.
Simulations are carried out using a mixed Fourier decomposition and
finite element code described in details in~\cite{GLLNR11}. 

The first simulation solves the
equations~\eqref{eq:pdfsf1}-\eqref{eq:pdfsf2}-\eqref{eq:pdfsf3}-\eqref{eq:pdfsf4}
with the initial data $\bu_{|t=0} = 0.1(-y \be_x + x \be_y)$. The
precession rate is $\epsilon=0.25$ and the reciprocal of the viscosity
is $1/\nu = 0.024$. Figure~\ref{fig:Re42} shows the time derivative of
the total energy $E_K= \frac12 \|\bu\|_{\bL^2}^2$ in the precessing
frame. Note that $\partial_t E_K$ is always negative, establishing
that $E_K$ is a decreasing function. This graph is in excellent
agreement with figure 1 of~\cite{WR09}. It also shows that $\bu
\rightarrow 0$ as $t \rightarrow \infty$ in agreement with
\eqref{Energy_precession} (\ie $\{0\}$ is indeed the attractor) .
\begin{figure}[h]
\centerline{
\includegraphics[width=0.5\textwidth]{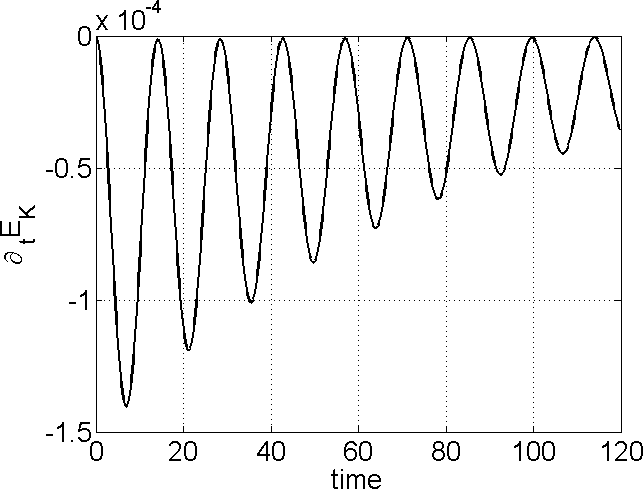}
}
\caption{Time evolution of $\partial_t E_K$ of the solution of
  equations
  \eqref{eq:pdfsf1}-\eqref{eq:pdfsf2}-\eqref{eq:pdfsf3}-\eqref{eq:pdfsf4}
  for $\beta=0.5625,\, \epsilon=0.25$ and $1/\nu = 0.024$.}
\label{fig:Re42}
\end{figure}

The second series of simulations solves
equations~\eqref{eq:nsp}-\eqref{eq:divp}-\eqref{eq:slip}-\eqref{eq:Poincare_stress}
with an {\it ad hoc} initial condition which is a very small
perturbation of the Poincar\'e solution.  The parameters are
$\epsilon=0.25$ and $1/\nu = 0.00375$.
Figure~\ref{fig:tot_Re267} shows the time evolution of the kinetic
energy of the perturbation to the Poincar\'e solution, $\delta E_K
=\frac12 \|\bu - \bu_P\|_{\bL^2(\Omega)}^2$, from $t=0$ to $t=1100$,
see curve labeled ``0 perturb''. The energy grows exponentially
initially, then saturates around an oscillatory state. These results
are similar to those shown in figure 1 of~\cite{WR09}.

In order to evaluate the influence of solid rotations, we restart the
computation at $t=1100$ by adding the perturbation $\pm 0.025(-y \be_x
+ x \be_y)$ to the solution. Since the maximum norm of the Poincar\'e
solution is 1.25, the added perturbations are only 2\% of the maximum
velocity.  Time integration is performed in each case until
convergence to an oscillating state is obtained.  The
curves corresponding to the time evolution of the kinetic energy of
the solutions thus obtained are labeled ``0.025 perturb'' and
``$-0.025$ perturb'' in
Figures~\ref{fig:tot_Re267}-\ref{fig:tot_Re267:b}.  We observe in
Figure~\ref{fig:tot_Re267:b} that these perturbations have strong
impacts on the asymptotic solutions.
\begin{figure}[h]
\centerline{
\subfigure[$\delta E_K$ vs. time]{
\includegraphics[width=0.48\textwidth]{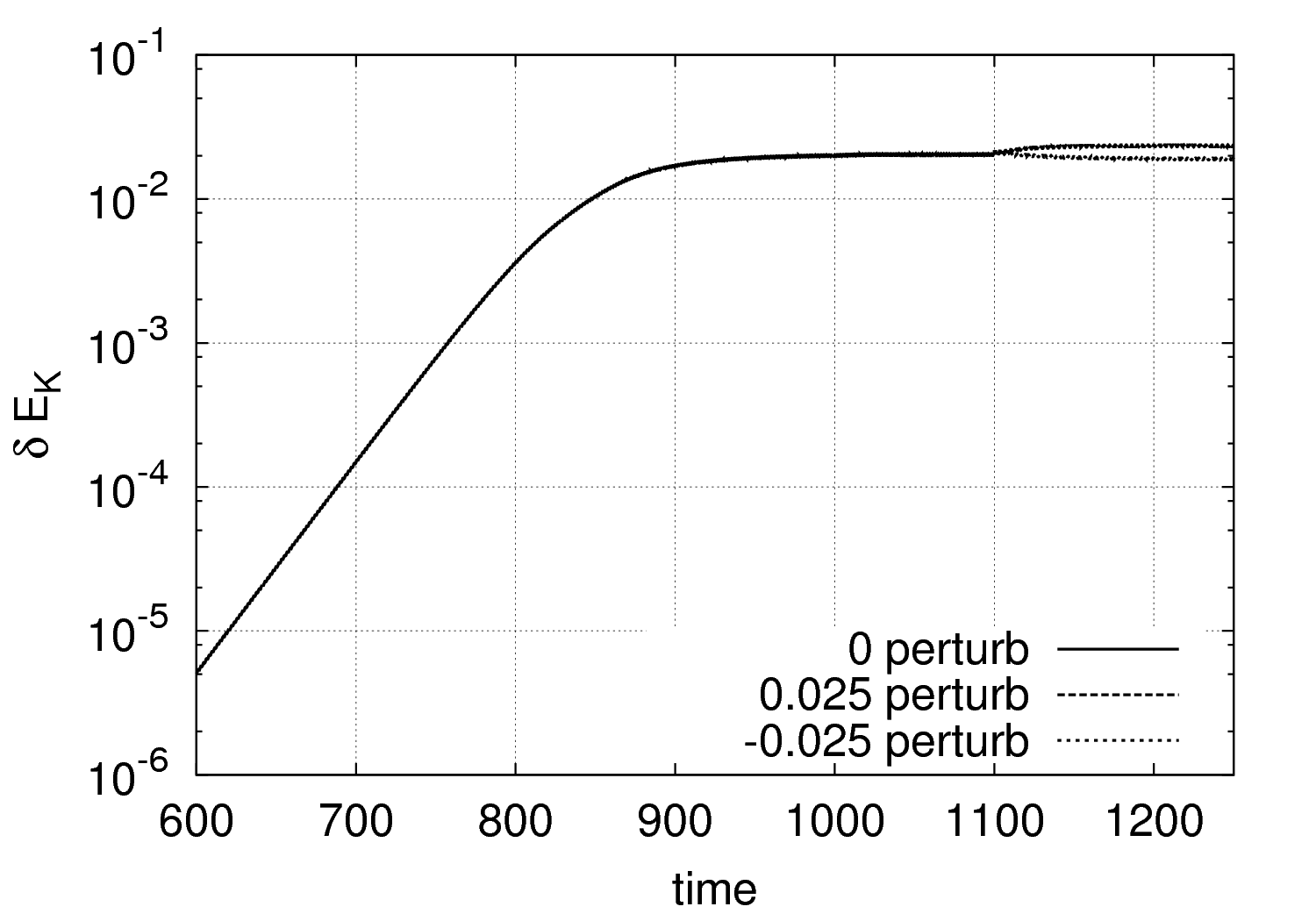}\label{fig:tot_Re267:a}}
\subfigure[zoom]{
\includegraphics[width=0.48\textwidth]{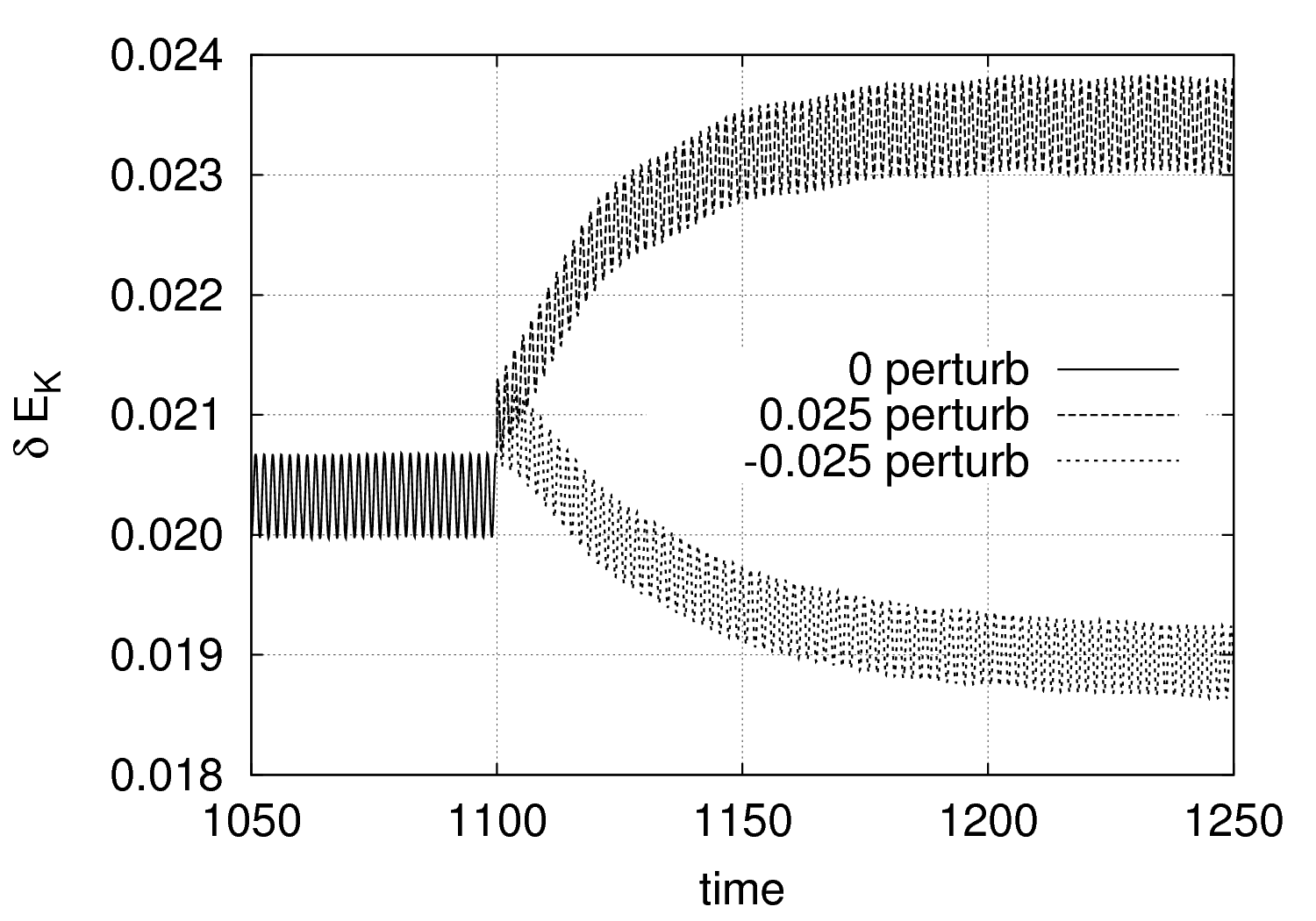}\label{fig:tot_Re267:b}}
}
\caption{(Color online) Time evolution of the kinetic energy, $\delta
  E_K$, of the perturbation of the solution of
  \eqref{eq:nsp}-\eqref{eq:divp}-\eqref{eq:slip}-\eqref{eq:Poincare_stress}
  with $\beta=0.5625$, $\epsilon=0.25$ and $1/\nu = 0.00375$ (a), and
  zoom (b).}
\label{fig:tot_Re267}
\end{figure}
\begin{figure}[h]
\centerline{
\subfigure[$\delta E_K,\, \delta E_{Kn}, \, \delta E_{Ks}$ with {\it ad hoc} initial condition]{
\includegraphics[width=0.32\textwidth]{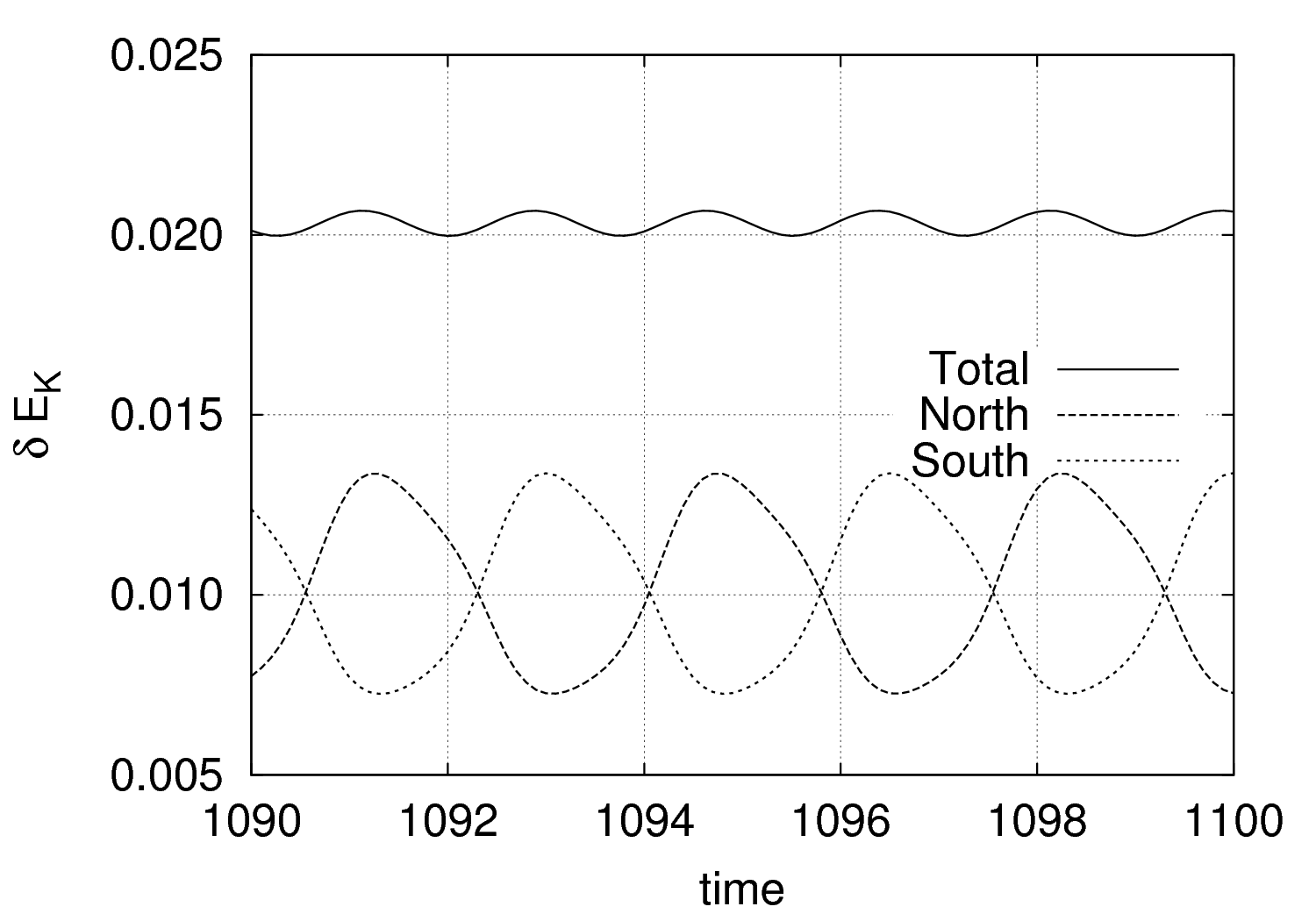}\label{fig:north_south_Re267:a}}
\subfigure[0.025 perturb]{
\includegraphics[width=0.32\textwidth]{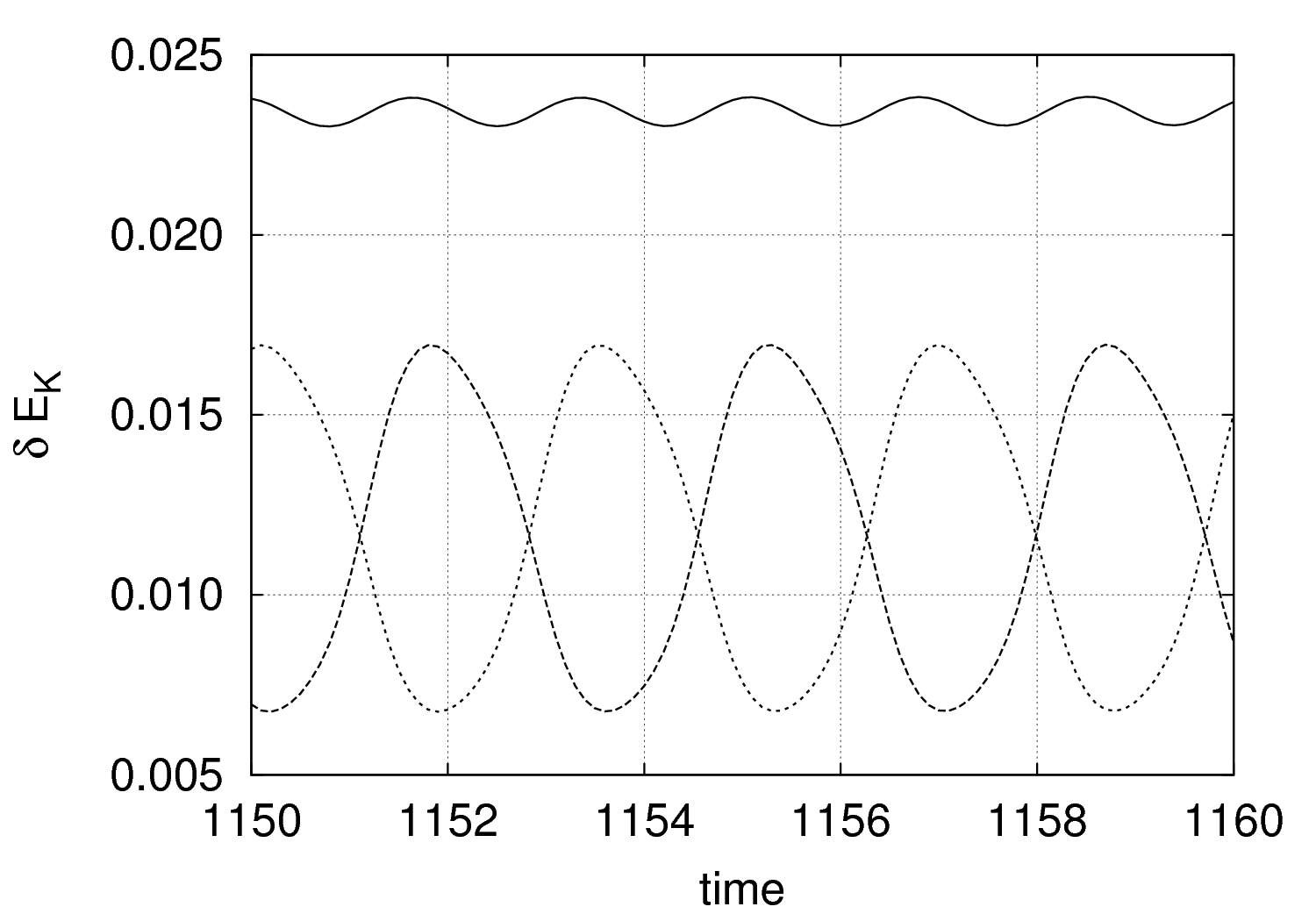}\label{fig:north_south_Re267:b}}
\subfigure[$-0.025$ perturb]{
\includegraphics[width=0.32\textwidth]{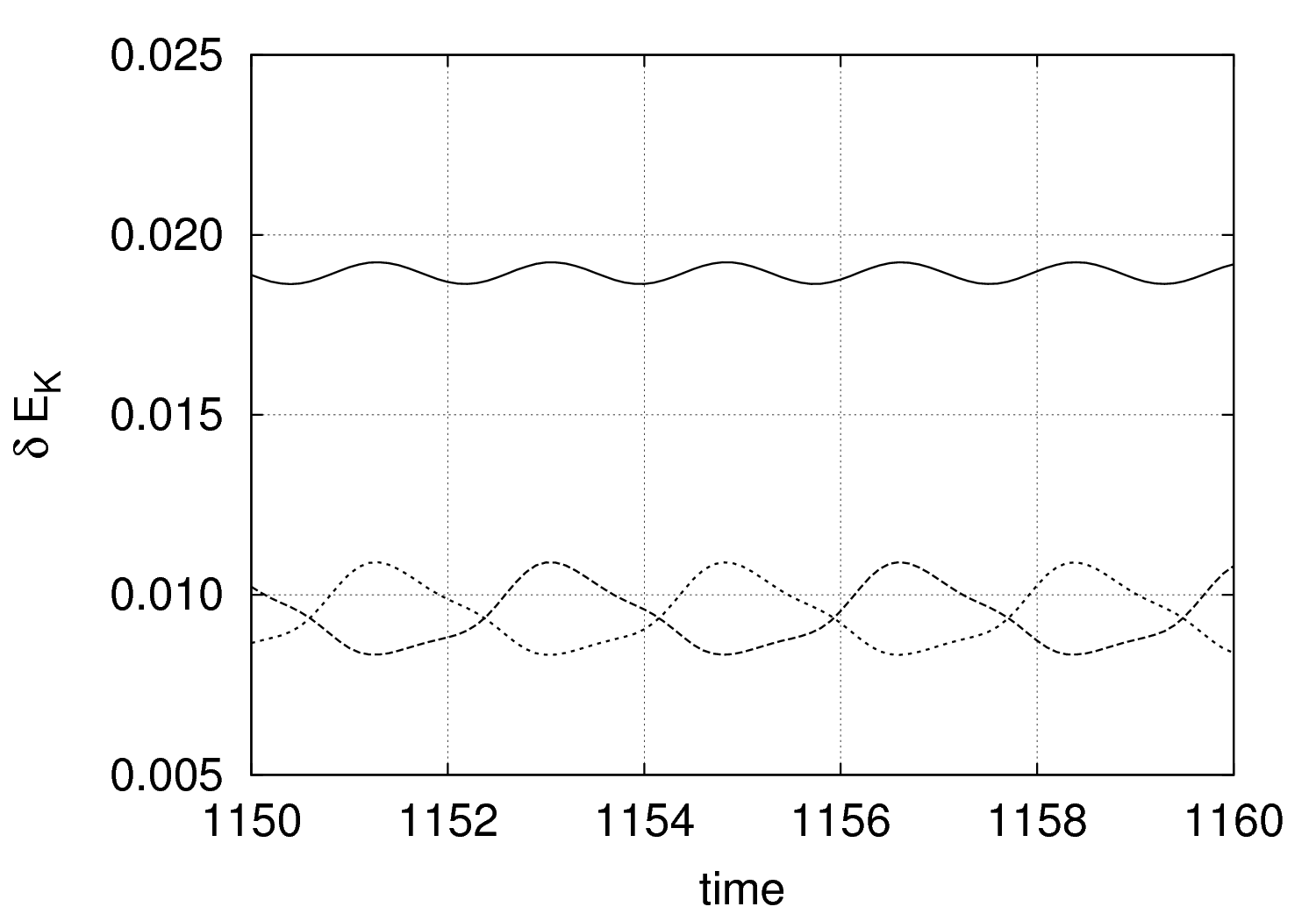}\label{fig:north_south_Re267:c}}
}
\caption{(Color online) Kinetic energy, $\delta E_K=\frac12 \|\bu -
  \bu_P\|_{\bL^2(\Omega)}^2$, where $\bu$ solves
  \eqref{eq:nsp}-\eqref{eq:divp}-\eqref{eq:slip}-\eqref{eq:Poincare_stress}
  with $\beta=0.5625,\, \epsilon=0.25$ and $1/\nu = 0.00375$: on each
  graph, top curve is $\delta E_K$, bottom curves are the energies
  $\delta E_{Kn}$ (dashed line) and $\delta E_{Ks}$ (dotted line) in
  the northern and southern hemispheres.}
\label{fig:north_south_Re267}
\end{figure}
To better compare our results with those from~\cite{WR09}, we show in
figure~\ref{fig:north_south_Re267} the energies in the northern
$\delta E_{Kn}$ and southern $\delta E_{Ks}$ hemispheres of the
spheroid.  The perturbation with the positive sign increases both the
total kinetic energy and the amplitude of the northern and southern
energies, whereas the perturbation with the negative sign decreases
both the total kinetic energy and the amplitude of the oscillations of
the northern and southern energies. The oscillations of the northern
and southern energies obtained with the positive perturbation are more
sinusoidal than those obtained with the negative perturbation.  The
shape of the oscillations of the northern and southern energies
obtained with the negative perturbation are similar to those in
\cite{WR09} (see the more pronounced nonlinear shape).
These simulations illustrate well that the 0-perturbation solution is
not an attractor, \ie it is not stable under perturbations. We have
verified (results not shown here) that an entire family of solutions
can be obtained from the 0.025-perturbation to the
$-0.025$-perturbation solutions by scaling the perturbation
appropriately. These tests show that using the stress-free boundary
condition to evaluate nonlinear behaviors of Navier-Stokes systems may
sometimes be dubious when the domain is axisymmetric.

\section{Discussion} \label{Sec:Conclusions} The so-called stress-free
boundary condition $(\bn\SCAL\bepsilon(\bu))\CROSS\bn_{|\front}=0$ is
often used in the geodynamo literature to avoid issues induced by
viscous layers. For example, very recently, an anelastic dynamo
benchmark~\cite{Jones2011120} was conducted in a rotating spherical
shell. The authors emphasize in their concluding section the
difficulties they encountered to compare four different codes using a
model with stress-free boundary conditions applied to the ICB and the
CMB. Since the container is a spherical shell, the balance
equation~\eqref{eq:moment} gives $\partial_t \bM=0$, and each group
had to apply some remedy in order to numerically conserve the three
components of the angular momentum.  But, more importantly, they also
had to use the same initial condition. There was no such difficulties
in the older dynamo benchmark \cite{Christensen200125} using the same
geometry because the no-slip boundary condition was prescribed at each
interface. These results illustrate again that the stress-free
boundary condition induces pathological stability behaviors when the
flow domain is axisymmetric.

We have shown in this work that stress-free boundary condition leads
to spurious behaviors when the fluid domain is axisymmetric. We hope
that the present work will help draw the attention of the geodynamo
community on this problem.  The above pathological stability behaviors
can be avoided by enforcing one additional condition.  For instance,
for problem~\eqref{eq:nsp}--\eqref{eq:slip} and
\eqref{eq:Poincare_stress}, one could think of enforcing the vertical
component of the angular momentum of the perturbation to the
Poincar\'e flow, say
\begin{equation} 
\int_{\front} (\bu - \bu_P)\SCAL(\be_z\CROSS\bx) \diff\bs=0, \label{u_minus_up_cross_u_P}
\end{equation}
or enforcing the perturbation and
the Poincar\'e flow to be orthogonal in average over the boundary, say
\begin{equation}
\int_{\front} (\bu - \bu_P)\SCAL \bu_P \diff\bs=0.\label{u_minus_up_dot_u_P}
\end{equation}
For problem~\eqref{eq:nssym}--\eqref{eq:nlbc}, one could think
of enforcing the vertical component of the total angular momentum
\begin{equation} 
\int_{\front} \bu \SCAL(\be_z\CROSS\bx) \diff\bs=0, \label{u_dot_ez_cross_x}
\end{equation}
as was done for the three components in the anelastic dynamo
benchmark~\cite{Jones2011120}.

We have suggested in \S\ref{Sec:admissible} to use a boundary condition 
that does not have the stability problems mentioned above. 
For the problem~\eqref{eq:nsp}--\eqref{eq:slip} this condition is
\begin{equation}
(\bn\SCAL\GRAD\bu)\CROSS\bn_{|\front} =
  (\bn\SCAL\GRAD\bu_P)\CROSS\bn_{|\front},
\end{equation}
and for the problem \eqref{eq:nssym}--\eqref{eq:nlbc} this condition is 
\begin{equation}
(\bn\SCAL\GRAD\bu)\CROSS\bn_{|\front} = 0.
\end{equation}

Let us finally emphasize that it is false to consider that the
momentum balance equation~\eqref{eq:moment} is an additional equation
that makes
\eqref{eq:nsp}-\eqref{eq:divp}-\eqref{eq:slip}-\eqref{eq:Poincare_stress}
a well-behaved dynamical system. The equation~\eqref{eq:moment} is a
redundant consequence of
\eqref{eq:nsp}-\eqref{eq:divp}-\eqref{eq:slip}-\eqref{eq:Poincare_stress}.
For instance, \eqref{u_minus_up_cross_u_P} (or
\eqref{u_minus_up_dot_u_P} or \eqref{u_dot_ez_cross_x}) is an
additional equation whereas \eqref{eq:moment} is not.

\section*{Acknowledgments} The authors are happy to acknowledge 
helpful email discussions with P.H. Roberts and P. Boronski. They also
want to thank Wietze Herreman for stimulating discussions.

%
%



%
%
%

\bibliographystyle{abbrv}   
\bibliography{bibliofl}

\end{document}